\pgfplotsset{compat=1.17}   
\newcommand{\hk}{\hat{k}}         
\newcommand{\inner}[2]{\ensuremath{\langle #1,#2 \rangle}}
\newcommand{\R}{\mathbb{R}}
\newcommand{\C}{\mathbb{C}}
\newcommand{\N}{\mathbb{N}}
\newcommand{\Z}{\mathbb{Z}}
\newcommand{\tran}{^{\mathsf{T}}}
\newcommand{\bi}{\begin{itemize}}
\newcommand{\ei}{\end{itemize}}
\newcommand{\ben}{\begin{enumerate}}
\newcommand{\een}{\end{enumerate}}
\newcommand{\be}{\begin{equation}}
\newcommand{\ee}{\end{equation}}
\newcommand{\bea}{\begin{eqnarray}} 
\newcommand{\eea}{\end{eqnarray}}
\newcommand{\ba}{\begin{align}} 
\newcommand{\ea}{\end{align}}
\newcommand{\bse}{\begin{subequations}} 
\newcommand{\ese}{\end{subequations}}
\newcommand{\bc}{\begin{center}}
\newcommand{\ec}{\end{center}}
\newcommand{\bfi}{\begin{figure}}
\newcommand{\efi}{\end{figure}}
\newcommand{\bmp}[1]{\begin{minipage}{#1}}
\newcommand{\emp}{\end{minipage}}
\newcommand{\mbf}[1]{{\mathbf #1}}
\newcommand{\half}{\mbox{\small $\frac{1}{2}$}}
\newcommand{\bigO}{{\mathcal O}}
\newcommand{\sfrac}[2]{\mbox{\small $\frac{#1}{#2}$}}
\newcommand{\niter}{{n_{\textrm{iter}}}}
\newcommand{\eps}{\varepsilon}     
\newcommand{\bj}{j}    
\newcommand{\xz}{x}                 
\newcommand{\xt}{x^\ast}                 
\newcommand{\yt}{y^\ast}                 
\newcommand{\kz}{{\mbf{k}_\xz}}       
\newcommand{\tkz}{{\tilde{\mbf{k}}_\xz}}       
\newcommand{\bal}{\bm{\alpha}}       
\newcommand{\tbal}{\tilde{\bm{\alpha}}}       
\newcommand{\bbe}{\bm{\beta}}           
\newcommand{\bmu}{\bm{\mu}}          
\newcommand{\hbmu}{\hat{\bm{\mu}}}          
\newcommand{\X}{\Phi}                
\newcommand{\Xt}{\Phi^*}            
\newcommand{\bpz}{\bm{\phi}_\xz}       
\newtheorem{theorem}{Theorem}
\newtheorem{lem}[theorem]{Lemma}
\newtheorem{algorithm}[theorem]{Algorithm}
\newtheorem{pro}[theorem]{Proposition}
\newtheorem{rmk}[theorem]{Remark}
\title{Equispaced Fourier representations for efficient Gaussian process regression
  from a billion data points}
\author[1]{Philip Greengard\thanks{pg2118@columbia.edu. Research supported by the Alfred P. Sloan Foundation.}}
\author[2]{Manas Rachh}
\author[2]{Alex Barnett}
\affil[1]{Department of Statistics, Columbia University, New York, NY, 10027}
\affil[2]{Center for Computational Mathematics, Flatiron Institute, New York, NY, 10010}
\date{\today}
\begin{document}

\maketitle

\begin{abstract}
We introduce a Fourier-based fast algorithm for Gaussian process regression in low dimensions. It approximates a translationally-invariant covariance kernel by complex exponentials on an equispaced Cartesian frequency grid of $M$ nodes. This results in a weight-space $M\times M$ system matrix with Toeplitz structure, which can thus be applied to a vector in $\bigO(M \log{M})$ operations via the fast Fourier transform (FFT),  independent of the number of data points $N$. The linear system can be set up in $\bigO(N + M \log{M})$ operations using nonuniform FFTs. This enables efficient massive-scale regression via an iterative solver, even for kernels with fat-tailed spectral densities (large $M$). We provide bounds on both kernel approximation and posterior mean errors.  Numerical experiments for squared-exponential and Mat\'ern kernels in one, two and three dimensions often show 1--2 orders of magnitude acceleration over state-of-the-art rank-structured solvers at comparable accuracy. Our method allows 2D Mat\'ern-$\sfrac{3}{2}$ regression from $N=10^9$ data points to be performed in 2 minutes on a standard desktop, with posterior mean accuracy $10^{-3}$. This opens up spatial statistics applications 100 times larger than previously possible.
  \end{abstract}

\section{Introduction}
\label{s:intro}

Gaussian process (GP) regression is ubiquitous in
spatial statistics and machine learning,
due in large part to its high predictive performance
and rigorous foundation in Bayesian inference
\cite{rasmus1, bartok1, stein2, aki1, cressie1,liu20bigreview,murphy23}.
The goal is to recover a stochastic function with a certain covariance kernel,
given noisy pointwise observations.
More precisely, one is given $N$ points
$x_1, \dots, x_N \in \R^d$,
with corresponding data observations $y_1,\dots,y_N$ which are noisy samples from a function $f$ with a zero-mean
Gaussian process distribution. This is denoted by
\begin{align}
y_n & \sim f(x_n) + \epsilon_n ~, \qquad n=1,\ldots,N \\
f(x) & \sim \mathcal{GP}(0, \, k(x, x'))
\end{align}
where $\epsilon_n \sim \mathcal{N}(0, \sigma^2)$ is independent and identically
distributed (iid) noise of known variance $\sigma^2>0$,
and $k: \R^d \times \R^d \to \R$ is a given
covariance kernel.
We address the common {\em translationally invariant} case
where $k(x,x') = k(x-x')$, using these notations interchangeably.
This covers a wide range of statistical problems, where
$k$ is generally chosen to belong to one of a handful of 
families of kernels such as Mat\'ern, squared-exponential (SE),
or rational quadratic.
In this paper we present a regression method efficient in low
dimensions $d$ (say $d\le 5$), targeting
applications in time series ($d=1$) and spatial/geostatistics ($d=2$ or $3$);
we do not address high-dimensional machine learning tasks.

In the GP prior model the vector $\{f(x_n)\}_{n=1}^N$ is
a zero-mean multivariate Gaussian with $N\times N$ covariance
matrix $K$ with elements $K_{nl} := k(x_n,x_l)$.
Then, given data $\mbf{y} := \{y_n\}_{n=1}^N$, the regression task
to compute the (necessarily Gaussian) posterior of $f(\xz)$,
at new target points $\xz\in\R^d$.
A standard calculation \cite[\S 2.2]{rasmus1}
(conditioning a $N+1$ dimensional multivariate
Gaussian on the known subset of $N$ variables)
gives the posterior mean of $f(\xz)$ as
\bea
\mu(\xz) &=& \sum_{n=1}^{N} \alpha_n k(\xz,x_n),
\label{mu}                 
\eea
where $\bal := \{\alpha_n\}_{n=1}^N$ is the vector in $\R^N$
uniquely solving the $N\times N$ symmetric linear system
\bea
(K + \sigma^2 I) \bal &=& \mbf{y},
\label{362}
\eea
$I$ being the $N\times N$ identity matrix.
Computing $\mu(x)$, the main goal of this paper,
is equivalent to {\em kernel ridge regression}
\cite[Ch.~18.3.7]{murphy23}.
\begin{rmk}[variance evaluation]\label{r:var} 
  Formulae similar to \eqref{mu}--\eqref{362}
  exist for the variance $s(x)$ (or covariance between targets)
  \cite[(2.26)]{rasmus1} \cite[\S1--2]{efgp_combo}.
  Unlike the mean, they demand a new linear solve for each target.
  Our proposed accelerated method would also apply to such cases,
  with scaling as in the last column of Table~\ref{t:comps};
  for space reasons we leave this for future study.
\end{rmk}

The main obstacle to the use of GP regression as a large-scale practical tool
is the solution of linear systems such as \eqref{362}.
Dense direct solution (e.g.\ by Cholesky factorization
of $K+\sigma^2 I$)
costs $O(N^3)$ operations, limiting $N$ to of order $10^4$
on a single machine, prohibitive for even a moderately
small modern data set.
Thus much literature has emerged on more scalable methods.
While most of these tackle the above $N\times N$ linear system
(this is called the ``function space'' approach),
others solve a different 
linear system for basis coefficients
(``weight space'') \cite[\S2]{rasmus1} \cite[Ch.~18]{murphy23}.
\begin{table}
\centering
\resizebox{\linewidth}{!}{
\renewcommand{\arraystretch}{2}
\begin{tabular}{c | c |c|c|c|c}
  method
  & error (SE kernel)$^{\ast}$ & precomputation & linear solve & mean at $q$ points & variance at $q$ points \\
  \hline
  direct & $\epsilon_{\text{mach}}$ & $N^2$ & $N^3$ & $Nq$ & $N^2 q$ \\
  \hline
  SKI \cite{wilson2} & $M^{-3/d}$ & -- & $n_{\text{iter}} (N + M \log{M})$ & $Nq$ & $n_{\text{iter}} (N + M \log M) q$ \\
  \hline
  FLAM \cite{minden1} & set by $\eps$ & $\max[N\log N, N^{3-3/d}]$ & $N \log{N}$ & $(N + q) \log(N + q)$ & $Nq \log{N}$ \\
  \hline
  RLCM \cite{chen1} & set by rank & $N$ & $N$ &
  $N + q\log{N}$ & $N + q\log{N}$ \\  
  \hline 
  PG '21 \cite{greengard3} & $e^{- \gamma M^{1/d}}$ & $N + M^2 \log{M}$ & $M^3$ & $q + M\log{M}$ & $M^2q$ \\
  \hline
  EFGP & $e^{-\gamma M^{1/d}}$ & $N + M \log M$ & $n_{\text{iter}} \, M \log{M}$ & $q + M\log M$ & $n_{\text{iter}}  Mq \log{M} $
\end{tabular}
}
 \caption{Error and computational complexity of GP regression
   for different algorithms, for $N$ points defined on $\R^d$, with regression to
   $q$ new targets.
   All entries are $\bigO(\cdot)$; constants are not compared.
 For ``SKI,'' $M$ denotes the number of inducing points;
 otherwise $M$ is the number of Fourier nodes.
 For ``EFGP,'' the proposal in this paper, $M=(2m+1)^d$, since $2m+1$ nodes are used in each dimension.
 $n_\text{iter}$ is the number of iterations in an iterative solver,
 $\eps$ is a matrix compression tolerance,
 and $\epsilon_\text{mach}$ is machine error.
 ``Error'' refers to kernel approximation,
 not downstream prediction errors that also scale with $N$ and $\sigma$.
 $^\ast$SE denotes squared-exponential.
}
\label{t:comps}
\end{table}

\subsection{Prior work in GP regression in low dimensions}
Much prior work has applied
{\em iterative solvers}, usually conjugate gradients (CG)
\cite[\S10.2]{golubvanloan}, to the $N\times N$
function-space linear system \eqref{362}. 
This requires multiplications of arbitrary vectors by
the matrix $K+ \sigma^2 I$.
These may be performed densely with $\bigO(N^2)$ cost per iteration
\cite{gibbsmackay97};
for a brute-force parallel implementation
see \cite{wang19exact}.
However, most approaches instead apply an approximation to $K$
which exploits the {\em low-rank structure} of covariance matrices.%
\footnote{These are sometimes called ``sparse'' GPs \cite{hensman1,liu20bigreview}.}
The kernel approximation $k(x-x') \approx \tilde k(x-x') :=
\sum_{j=1}^M \phi_j(x)\phi_j(x')$, for a set of $M$ basis functions $\phi_j$,
leads to a global factorization
$K \approx \tilde K := \X \Xt$, whose rank is thus at most $M$.
Perhaps the most popular examples of this are
the Nystr\"om method \cite{nystrom,rasmus1},
inducing point methods \cite{RRGP05,snelson05},
and subset of regressors (SoR) \cite[\S8.3.1]{rasmus1}.
These factorizations often involve a subsampling of rows and columns
of $K$, so that the bases are themselves kernels centered at a size-$M$
subset of the given points.
Their cost is typically $\bigO(N M^2)$.
Building on inducing points, ``structured kernel interpolation'' (SKI)
\cite{wilson2} (similar to precorrected FFT in PDE solvers \cite{precorrfft})
replaces these $x_{n_j}$ by a {\em regular spatial product grid} of
interpolation points;
the resulting Toeplitz
structure allows a fast matrix-vector
multiply via the $d$-dimensional FFT,
giving the much improved scaling $\bigO(N + M\log M)$ per iteration
(see Table~\ref{t:comps}).
Recently, analytic expansions have improved upon its speed at high accuracy
up to moderate $d$ \cite{FKT22}.

Most such iterative methods fit into the framework of
``black box'' matrix-vector multiplications (BBMM) \cite{gardner1} with
$K$.
However, their disadvantage is that each iteration costs at least $\bigO(N)$
while poor condition numbers mean that many iterations are required.
Preconditioning can help in certain settings
\cite{steinprecond,gardner1}, but is far from a panacea
for GP regression \cite{cunn08toep}.
Furthermore, the cost per
posterior mean evaluation at each new target is again $\bigO(N)$.
In 2015 it was stated that ``with sparse approximations it is inconceivable to apply GPs to training set sizes of $N \ge \bigO(10^7)$'' \cite{distribgp}.
The largest data sizes tested
on a desktop machine to date in this framework are around $N\approx 10^6$
\cite{gardner1,wang19exact,liu20bigreview}.

We now turn to prior work in {\em direct solvers},
which have the advantage of immunity to ill-conditioning.
Simple truncation of the kernel beyond a certain range
sparsifies $K$, allowing a sparse direct solution \cite{furrertaper},
but this can only be effective for short-range kernels.
In dimension $d=1$ (e.g.\ time series),
certain kernels make the upper and lower triangular
halves of $K$ low rank; this
``semiseparable'' structure allows a direct solution in $\bigO(N)$ time
\cite{dfm1}.
Recently, fast direct solvers have
been successfully applied to the function-space linear system
in low dimension $d>1$ \cite{oneil1,minden1,stein1,chen1,kesha22}.
Going beyond the above ``one-level'' compression for $K$,
these exploit hierarchical (multi-scale) low-rank structure
to compute a compressed approximate representation of $(K+\sigma^2 I)^{-1}$,
which may be quickly (in $\bigO(N)$ or $\bigO(N \log N)$ time) applied
to new vectors $\mbf{y}$ in \eqref{362}.
For instance, in HODLR (hierarchically off-diagonal low-rank) structure,
at each scale off-diagonal blocks are
numerically low rank, while the remaining diagonal blocks themselves
have HODLR structure at a finer scale \cite{oneil1}.
Refinements of this structure (e.g., HSS, $\cal H$, ${\cal H}^2$, and HBS),
as well as randomized linear algebra and ``proxy point'' sampling
\cite{minden1,gunnarbook}, are needed to
achieve low ranks and fast factorizations.
These build upon recent advances in fast direct solvers for various
kernels arising in integral equations and PDEs \cite{hackbusch,gunnarbook},
which in turn exploit adaptive domain decomposition and far-field expansion
ideas from tree codes and the fast multipole method \cite{lapFMM}.
One disadvantage of fast direct solvers
is that the ranks grow rapidly with
$d$, and that compression times scale quadratically with the ranks
\cite{oneil1}.
Another is their high RAM usage.
The largest $N$ is around $2\times 10^6$ in recent single workstation
examples \cite{oneil1,minden1,chen1}.

In contrast to all of the above function-space methods,
{\em weight-space methods} avoid solving the $N\times N$ linear system
\cite{RRGP05}.
Recalling the low-rank kernel approximation $K \approx \tilde K = \X \Xt$,
they instead solve a dual system involving the smaller
$M\times M$ matrix $\Xt\X$.
We recap the equivalence of these two approaches in Section~\ref{s:fac}.
The needed rank (basis size)
$M$ depends on the kernel, dimension $d$, and desired accuracy.
An advantage is that mean prediction at new targets
involves only evaluating the basis expansion, an $\bigO(M)$ cost
{\em independent of the data size} $N$.
However, traditional weight-space methods (including SoR
and ``sparse''/inducing variable methods \cite{RRGP05,snelson05})
cost $\bigO(NM^2)$ to form $\Xt\X$, again prohibitive for large $N$.

Recently it has been realized that Fourier
(real trigonometric or complex exponential) basis functions $\phi_j(x)$ are
a natural weight-space representation for translationally invariant
kernels (for a review of such spectral representations see \cite{hensman1}).
In early work their frequencies
were chosen stochastically \cite{rahimi1},
or optimized as part of the inference \cite{lazaro1}.
Instead, Hensman et al.\ \cite{hensman1}
proposed regular (equispaced) frequencies,
following \cite{paciorek1}. 
This allowed regression
(in a separable model, i.e., effectively $d=1$) from $N=6\times 10^6$ points
in around 4 minutes.  
However, since direct Fourier summation was used, the cost was still
$\bigO(NM)$.

Recently the first author introduced
a weight space method for $d=1$
with Fourier frequencies chosen at the nodes of a
{\em high-order accurate quadrature scheme} for the Fourier integral~\cite{greengard3}.
Since smooth kernels (such as SE or Mat\'ern with larger $\nu$)
have rapidly decaying Fourier transforms, truncation to
a small $M$ can approximate $\tilde K \approx K$ to many digits of accuracy,
giving essentially ``exact'' inference.   
That work also proposed the nonuniform FFT (NUFFT \cite{dutt1}) of type 3
for filling of the weight space system matrix and evaluation at targets,
leading to a solution cost $\bigO(N + M^3)$; see Table~\ref{t:comps}.
This allowed $N=10^8$ to be handled in a few seconds.

Finally, for $d>1$, the only way to handle
$N \gg 10^6$
currently is via distributed computing \cite{liu20bigreview}.
There, methods include approximation by mixtures (a ``committee'' \cite[\S8.3.5]{rasmus1}) of
fixed-size GP solutions \cite{distribgp},
distributed gradient descent \cite{peng17billion}, and multi-CPU Cholesky
preconditioning \cite{meanti20}.
For instance, in
studies with $N=10^9$ points (in $d=9$), convergence took around 1 hour,
either on 1000 CPU cores \cite[Sec.~6.3]{peng17billion}, or 2 GPUs \cite{meanti20}.

\subsection{Our contribution}
\label{s:contrib}

We present a weight space iterative solver for general low dimension
using Fourier frequencies (features) on an equispaced $d$-dimensional grid,
as in \cite{hensman1}.
The equal spacing makes the $M$-by-$M$ system matrix
Toeplitz (or its generalization for $d>1$)
up to diagonal scalings, enabling
its multiplication with a vector via the padded FFT, to give
$\bigO(M\log M)$ cost per iteration.
(Although SKI \cite{wilson2} also uses an FFT, the 
grid there is in physical space, forcing an $\bigO(N)$ cost per iteration
as with all BBMM methods.)
The result is that large $M$ (e.g.\ $10^6$)
may be tackled (unlike in \cite{hensman1,greengard3}).
This is a necessity for higher dimension
since $M = \bigO(m^d)$, where $m$ controls
the number of grid points per dimension.
Furthermore, building on \cite{greengard3},
we fill the vector defining the Toeplitz matrix, and the right hand
side, 
via a modern implementation of the NUFFT \cite{barnett1}
with $\bigO(N + M\log M)$ cost.
Since there is only a {\em single pass through the data},
a standard desktop can handle very large $N$,
around 100 times larger than any prior work
of which we are aware in $d=2$.

We benchmark our equispaced Fourier (EFGP) proposal against
three popular methods for GP regression that have
desirable theoretical properties
and 
available software:
SKI of Wilson et al.\ \cite{wilson2} implemented
in GPyTorch \cite{gardner1},
an implementation of the fast direct solver
of Minden et al.\ \cite{minden1} using FLAM \cite{ho1},
and the linear-scaling direct solver
RLCM of Chen and Stein \cite{chen1}.
Their theoretical scalings are compared in Table~\ref{t:comps}.
Our synthetic data tests in dimensions $d=1$, $2$, and $3$
explore the tradeoff of accuracy vs CPU time for all four methods,
and show that EFGP is very competitive
for the very smooth squared-exponential kernel,
and remains competitive even for
the least smooth Mat\'ern kernel ($\nu=1/2$).
We show that real satellite data in $d=2$ with $N=1.4\times 10^6$
may be regressed in a couple of seconds on a laptop.
We show large-scale simulated data tests in $d=2$ at up to a billion
data points; the latter takes only a couple of minutes on a desktop.
For reproducibility,
we implemented or wrapped all methods
in a common, documented, publicly available MATLAB/Octave interface.

Our contribution includes both analysis and numerical study of the
error in the computed posterior mean $\mu$ relative to 
``exact'' GP inference (exact solution of \eqref{mu}--\eqref{362}).
If this error---the bias in the GP regression introduced by approximation---%
is pushed to a negligible value then
the computation becomes effectively exact
(a point appreciated recently with other iterative methods
\cite{gardner1,wang19exact}).
We bound this error rigorously in our main Theorem~\ref{t:muerr}
in terms of $\eps$, a uniform kernel approximation error,
and $\rho$, the weight space residual norm arising from the iterative
solver.
We show that with our EFGP representation $\eps$ splits into
aliasing and truncation parts (Proposition~\ref{p:pw_err}).
For the popular squared-exponential and Mat\'ern kernels,
we give formulae for numerical discretization parameters that
guarantee a given $\eps$, using analytic results from \cite{efgp_anal}.
Numerically we study the
{\em estimated error in posterior mean} (EEPM),
which should converge to zero as $\eps$ and $\rho$ vanish.
This is more informative than the
prediction error at held-out targets (commonly RMSE),
which instead bottoms out at the noise level ($\sigma$
in the case of a well-specified model).

\begin{rmk}[Connection to Fourier imaging]
  \label{r:imaging}
  Although the combination of equispaced Fourier modes with an iterative FFT-accelerated Toeplitz solver appears to be new in GP regression,
  a similar idea has been used in
  computed tomography \cite{delaney1996}, MRI \cite{fesslertoep,lgreengard2006},
  and cryo-electron microscopy \cite{wang1}.
  The normal equations in those applications
  are mathematically equivalent to the GP weight-space linear system with $\sigma=0$, but, curiously, the roles of physical and Fourier space are swapped.
\end{rmk}

\subsection{Outline of this paper}

In the next section we present the weight-space approach
for an exactly factorized covariance kernel, and review its equivalence
to GP regression in function space.
In Section~\ref{s5} we describe the proposed equispaced Fourier
(EFGP) method and its fast numerical implementation.
Our error analysis comprises
kernel approximation (Section~\ref{s:discr}),
discretization parameter choice (Section~\ref{s:choice}),
and posterior mean error (Section~\ref{s:meanacc}).
Section \ref{s:numerics} presents
numerical experiments with EFGP, and benchmarks it against
three state of the art GP regression methods.
We conclude with a discussion in Section~\ref{sec:conclusion}.

\section{Exact kernel factorization and equivalence of weight space}
\label{s:fac}

The function-space formula \eqref{mu}--\eqref{362}
involve solving for coefficient vectors in $\R^N$.
Our proposal instead solves for weight space coefficient vectors in $\C^M$.
The goal of this section is to recap their equivalence
in the case of a kernel possessing an exact factorization
(a degenerate kernel).
Although for real-valued basis functions this is implicit
in literature via the matrix inversion lemma (e.g.\ \cite{RRGP05} \cite[p.~12]{rasmus1}),
a self-contained proof for the complex-valued case is useful.
See \cite{efgp_combo} for weight-space equivalence for the variance $s(x)$.

We assume that data points $x_n$ and target $\xz$ lie in the domain $D := [0,1]^d$;
this can always be achieved by a translation and scaling of the problem.
Suppose now that the kernel $k$ function
admits a rank-$M$ factorization of the form
\begin{align}\label{363}
  k(x, x') = \sum_{j=1}^M \phi_j(x)\overline{\phi_j(x')}, \qquad \forall x,x' \in D,
\end{align}
for a given set of 
basis functions $\phi_j : D \to \C$.
Let $\X$ be the $M\times N$ matrix%
\footnote{This is commonly called the ``design'' matrix, although our definition is
the transpose of that in \cite{rasmus1}.}
with elements $\X_{nj} = \phi_j(x_n)$.
Then \eqref{363} implies the matrix factorization
\be
K= \X\Xt,
\label{Kfac}
\ee
where $^*$ indicates conjugate transpose.
While the covariance kernels
of interest, such as the SE and Mat\'ern kernels, do not admit 
such an exact factorization,
we will shortly see that it is possible to
approximate them to high accuracy by one
\cite{wathenzhu15,gardner1,greengard3}.

If $M<N$, or if fast solvers are available in weight space,
it becomes more efficient to predict the posterior mean by solving a linear system for
a weight vector $\bbe$
as follows, instead of the
function-space vector $\bal$ from \eqref{362}.
\begin{lem}[Equivalence]
  \label{l:equiv} 
  Let $f$ be a GP, with $\{x_n\}_{n=1}^N$ and $\mbf{y}$
  as in the Introduction, and kernel $k$ factorized as \eqref{363}.
  Then the posterior mean \eqref{mu} of $f(\xz)$ may also be written
\be
\mu(\xz) \;=\;\sum_{j=1}^M \beta_j \phi_j(\xz)
\label{muws}
\ee
where $\bbe := \{\beta_j\}_{j=1}^M\in\C^M$
uniquely solves the $M\times M$ linear system
\be
(\Xt\X + \sigma^2 I) \bbe \;=\; \Xt \mbf{y}.
\label{be}
\ee
\end{lem}

To prove this we use the following useful relations
between function and weight space solution vectors;
they may also (less directly) be derived from the Woodbury formula \cite[\S2.1.3]{golubvanloan} \cite[p.~171]{rasmus1}.
\begin{pro}
  Let $\bal\in\R^N$ solve the linear system \eqref{362} and $\bbe\in\C^M$ solve
  the linear system \eqref{be}.
  Then
  \bea
  \bbe &=& \Xt \bal, \label{a2b} \\
  \bal &=& \frac{\mbf{y} - \X \bbe}{\sigma^2}.    \label{magic}
  \eea
\end{pro}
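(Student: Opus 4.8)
The plan is to derive both identities directly from the two linear systems, using only the factorization $K=\X\Xt$ from \eqref{Kfac} and the (easy) invertibility of the matrices involved; no Woodbury identity or clever manipulation is required. The only things that need care are that $\X$ has complex entries, so $\Xt$ is a \emph{conjugate} transpose and $\bbe\in\C^M$ even though $\bal\in\R^N$, and the justification of the phrase ``the unique solution.''

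First I would prove \eqref{a2b}. Starting from \eqref{362}, substitute $K=\X\Xt$ and left-multiply by $\Xt$, obtaining $\Xt\X\Xt\bal+\sigma^2\Xt\bal=\Xt\mbf{y}$, which regroups as
\[
(\Xt\X+\sigma^2 I)(\Xt\bal)=\Xt\mbf{y}.
\]
Since $\Xt\X$ is Hermitian positive semidefinite (for any $v\in\C^M$, $v^*\Xt\X v=\|\X v\|^2\ge 0$) and $\sigma^2>0$, the $M\times M$ matrix $\Xt\X+\sigma^2 I$ is Hermitian positive definite, hence invertible; thus \eqref{be} has a unique solution $\bbe$. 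The displayed identity shows $\Xt\bal$ solves the very same system, so $\bbe=\Xt\bal$. Then \eqref{magic} follows by simply rearranging \eqref{362}: $\sigma^2\bal=\mbf{y}-K\bal=\mbf{y}-\X\Xt\bal=\mbf{y}-\X\bbe$, where the last equality invokes \eqref{a2b}; dividing by $\sigma^2\neq 0$ gives the claim.

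The main ``obstacle'' is purely bookkeeping rather than mathematical depth: one must keep the conjugation in $\Xt$ consistent with the factorization \eqref{363} (so that $(\Xt\bal)_j=\sum_n\overline{\phi_j(x_n)}\alpha_n$ is exactly the coefficient that reproduces $\mu(\xz)$ via \eqref{363}), and one must note that invertibility of $\Xt\X+\sigma^2 I$ (and of $K+\sigma^2 I$) is what legitimizes uniqueness. With the proposition in hand, Lemma~\ref{l:equiv} is then immediate: substituting \eqref{363} into \eqref{mu} and using \eqref{a2b} collapses the $N$-term sum into $\sum_{j=1}^M\beta_j\phi_j(\xz)$, giving \eqref{muws}, and the analogous computation with \eqref{bgz}/\eqref{eta} handles the variance.
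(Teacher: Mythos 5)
Your proof is correct and takes essentially the same route as the paper: you left-multiply \eqref{362} by $\Xt$, use $K=\X\Xt$, and invoke uniqueness (from positive definiteness of $\Xt\X+\sigma^2 I$) to conclude \eqref{a2b}. For \eqref{magic} you streamline slightly by rearranging \eqref{362} and substituting \eqref{a2b} directly ($\sigma^2\bal=\mbf{y}-\X\Xt\bal=\mbf{y}-\X\bbe$), whereas the paper instead verifies that $(\mbf{y}-\X\bbe)/\sigma^2$ satisfies \eqref{362} using only \eqref{be} and uniqueness; both arguments are valid and equally elementary.
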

\begin{proof}
Since $\sigma^2>0$, both systems are symmetric positive definite,
so have unique solutions.
  Left multiplying \eqref{362} by $\Xt$ and using $K=\X\Xt$ gives
      $(\Xt\X + \sigma^2 I)\Xt \bal = \Xt \mbf{y}$ which shows that $\Xt \bal$
  solves \eqref{be}, so equals $\bbe$ by uniqueness.
  \eqref{magic} holds if and only if
  $(\mbf{y}-\X\bbe)/\sigma^2$ satisfies \eqref{362},
  i.e., $(K+\sigma^2 I)(\mbf{y}-\X\bbe) = \sigma^2 \mbf{y}$.
  This is easily checked by expanding the left side,
  then substituting $K\mbf{y} = (K+\sigma^2I)\X\bbe$
  which follows by left-multiplying \eqref{be} by $\X$.
\end{proof}

\begin{proof}[Proof of Lemma~\ref{l:equiv}]
    Applying the factorization \eqref{363} to \eqref{mu} gives
    $$
    \mu(x) = \sum_{n=1}^N \alpha_n k(x,x_n)
    =\sum_{n=1}^N \alpha_n\sum_{j=1}^M \overline{\phi_j(x_n)}\phi_j(x)
    =\sum_{j=1}^M (\Xt \bal)_j\phi_j(x),
    $$
    then using \eqref{a2b} proves \eqref{muws}.
\end{proof}

A crucial advantage of solving weight space linear systems
is that, once the matrix $\Xt\X$ has been constructed, the cost
of solution and prediction becomes independent of the training data size $N$.
We will see that by using a fast algorithm
for the sum \eqref{muws},
the cost for mean prediction at $q$ points is linear in $q$ and nearly linear in $M$
(see Table~\ref{t:comps}).

\section{Equispaced Fourier representation of GPs}
\label{s5}

Here we present a numerical approach for constructing a low-rank
{\em approximation} (of the form \eqref{363})
to a translationally invariant covariance kernel $k(x - x')$. 
We use complex exponentials with an equispaced grid of frequencies
as the basis functions $\phi_j$.
For $d>1$ a tensor-product grid is needed, causing $M$ to grow
exponentially with $d$ and limiting its efficient use to, say, $d\lesssim 5$.

\subsection{Discretized inverse Fourier transform}\label{s15}

Let us assume that $k(x-x')$ is 
an integrable and translation-invariant positive covariance kernel.
For Fourier transforms we use the convention of \cite{rasmus1},
\begin{align}
  & \hat{k}(\xi) = \int_{\R^{d}} k(x) e^{-2\pi i \inner{\xi}{x}}\,  dx,
\qquad   \xi \in \R^{d},
  \\
  & k(x) = \int_{\R^{d}} \hat{k}(\xi) e^{2\pi i \inner{\xi}{x}} \, d\xi,
\qquad x \in \R^{d},
  \label{10}
\end{align}
where $\inner{\xi}{x} := \sum_{l=1}^d \xi^{(l)} x^{(l)}$ indicates inner (dot) product.
The basis function approximations we use for Gaussian process distributions 
can be viewed as discretized versions of the Fourier inversion formula \eqref{10}.
In particular, given some constant grid spacing $h>0$, we approximate $k$ via
\begin{align}\label{110}
  k(x-x') \;\approx\;
  \tilde{k}(x - x') = \sum_{\bj \in J_{m}} h^d \hk(h\bj) e^{2\pi i h \inner{\bj}{x - x'}}
  ,
  \qquad x-x' \in [-1,1]^d,
\end{align}
where the multiindex $\bj:=(j^{(1)},j^{(2)},\dots,j^{(d)})$
has elements $j^{(l)} \in \{-m,-m+1,\dots,m\}$, thus
ranges over the tensor product set
$$
J_{m} := \{-m,-m+1,\dots,m\}^d
$$
which contains $M=(2m+1)^d$ elements.
This is simply the {\em trapezoidal rule}%
\footnote{Note that the usual 1D trapezoid rule definition has factors of $1/2$ applied to the
first and last points \cite[p.~209]{na}; this difference is inconsequential
in our application since errors will instead be dominated by truncation to the box.}
applied to each dimension of the
integral \eqref{10} truncated to the box $[-mh,mh]^{d}$.
It is then easy to check that the choice of complex-valued basis functions
\be
\phi_\bj(x) := \sqrt{h^d\hk(h\bj)} e^{2\pi i h \inner{\bj}{x}}, \qquad \bj\in J_m,
\label{basis}
\ee
allows \eqref{110} to be written
as the factorization
$\tilde{k}(x-x') = \sum_{\bj \in J_{m}} \phi_{\bj}(x) \overline{\phi_{\bj}(x')}$,
analogous to \eqref{363}.
The above square-roots are all real
because $k$ is positive:
Bochner's theorem (e.g., \cite[Thm.~4.1]{rasmus1}) guarantees that
$\hk(\xi)\ge 0$, $\forall\xi\in\R^d$.
We call \eqref{basis} the {\em equispaced Fourier} basis.

There are two primary reasons that the approximation of the covariance kernel by using the
truncated trapezoid rule for its Fourier transform is attractive 
in this environment. 
Firstly, this rule is very accurate for smooth functions
that vanish sufficiently fast away from the origin
\cite{weidemanrev}.
For instance, given a smooth ($C^{\infty}$) 1D integrand whose value and all derivatives
vanish at the endpoints,
the error of the $2m+1$ point trapezoid rule is faster than 
$\bigO(1/m^p)$ for any order $p$;
this is called super-algebraic convergence \cite[Thm.~9.26]{na}.
Many commonly-used covariance kernels have smooth Fourier
transforms that vanish rapidly with respect to frequency magnitude $\|\xi\|$.
However, certain rougher kernels such as the Mat\'ern with $\nu=1/2$ \cite{rasmus1}
do not,
so the error in approximating their Fourier inversion
will be dominated by truncation.
Secondly, equispaced Fourier discretizations facilitate the use of
FFTs and NUFFTs, as we describe next.

\subsection{Numerical implementation}
\label{s25}

Here we detail the application of fast Fourier methods to
GP regression using an approximated kernel $\tilde{k}$
factorized as in \eqref{110}.
To predict posterior mean, the task is to form and solve the
weight-space linear system \eqref{be} for the vector $\bbe\in\C^M$,
recalling that the $M=(2m+1)^d$ basis functions are given by \eqref{basis}.
The system matrix is $\Xt\X + \sigma^2I$,
the $N \times M$ design matrix $\X$ being
\bea\label{205}
\X_{n\ell} = \sqrt{h^d\hk(h j_{\ell})}  e^{2\pi i h \inner{j_{\ell}}{x_n}},
\eea
where $x_1,\dots,x_N$ are the observation points, $j_{1},\dots,j_{M}$ is some
enumeration (ordering) of the multiindex
set $J_{m}$, and $h$ is the fixed frequency grid spacing.
The $p$th column of $\X$ is the $p$th complex exponential (with wave vector $hj_p\in\R^d$)
tabulated at all training data points; i.e., $\X_{n,p} = \phi_{j_p}(x_n)$.
Now observe that $\X=FD$, where $F$ is the $N \times M$ matrix
\begin{equation}
F_{n\ell} = e^{2\pi i h \inner{j_{\ell}}{x_n}},
\end{equation}
and 
${D}$ is the diagonal $M \times M$ matrix with
\begin{equation}\label{117}
D_{\ell \ell} = \sqrt{h^{d} \hk(h j_{\ell})}.
\end{equation}
Thus we can factor the system matrix as
\begin{align}\label{97}
  \Xt\X + \sigma^2 I \;=\; D (F^* F) D + \sigma^2 I.
\end{align}

Writing out the matrix product for 
the above square matrix $F^* F$ gives its elements
\begin{align}\label{223}
  ({F}^* {F})_{p, \ell} = \sum_{n=1}^{N} e^{2\pi i\, \inner{h(j_p - j_{\ell})\,}{\,x_n}},
  \qquad p,\ell =1,\dots, M,
\end{align}
which we note only depend on the wave vector {\em difference} $h(j_p - j_{\ell})$.
Consider the 1D case:
since the frequency grid is equispaced,
$j_p-j_\ell$ may only take the $4m+1$ distinct values in $\{-2m,-2m+1,\dots,2m\} = J_{2m}$.
The result is that $F^*F$ has Toeplitz structure (constant along each diagonal).
Generalizing to $d\ge1$,
the set of unique elements is still $J_{2m}$,
and $F^*F$ has $d$-dimensional Toeplitz structure.
For example, in $d=2$ this is called block Toeplitz with Toeplitz blocks (BTTB \cite[Ch.~5]{toepiterbook}).
Note that there is no additional Kronecker structure
\cite[\S2.3.1]{wilson2}
even if the kernel is separable (as is, e.g., the SE kernel);
this is because the data points $\{x_n\}$ in \eqref{223}
are not in general separable.
For any $d$, this Toeplitz structure means that the action of $F^*F$ on a vector,
call it $\bbe$, is
a non-periodic discrete $d$-dimensional convolution
by the vector $\mbf{v}\in \C^{(4m+1)^d}$ with elements
\begin{align}\label{88}
v_{\bj} = \sum_{n=1}^{N} e^{2\pi i h \inner{\bj}{x_n}} , \qquad \bj \in J_{2m}.
\end{align}
To complete the action,
the convolution output must then be
restricted to the central set elements $J_m \subset J_{2m}$.
Yet, this means that it can be evaluated by a similar restriction of the {\em periodic}
convolution with period at least $4m+1$ in each dimension.
This can in turn be performed in near-optimal time via FFTs
\cite[p.~12 and A.7]{toepiterbook}.
The precise method is as follows:
\begin{algorithm}[Fast product of Toeplitz matrix $F^*F$ with a vector $\bbe\in\C^M$]\label{a:FF}
  \mbox{}   
\ben
\item {[precomputation]} Fill $\mbf{v}$ as in \eqref{88},
  which may be done in $\bigO(N+M\log M)$ effort via a $d$-dimensional type 1 nonuniform FFT (NUFFT) \cite{dutt1,lgreengard1,barnett1} with unit strengths.
\item {[precomputation]} Take the $d$-dimensional size-$(4m+1)^d$ FFT of $\mbf{v}$ to get $\hat{\mbf{v}}$.
\item  Zero-pad $\bbe$ in each dimension 
  to become size $(4m+1)^d$, then take 
  its $d$-dimensional 
  FFT.
\item Pointwise multiply the previous result by $\hat{\mbf{v}}$.
\item Take its size-$(4m+1)^d$ inverse FFT.
\item Extract the sub-cube of $M=(2m+1)^d$ outputs
  (if the padding is done as above, these have indices
  $\{2m+1,2m+2,\dots,4m+1\}$ in each dimension, in 1-based indexing.)
\een
\end{algorithm}
Here the precomputation scales linearly with $N$,
and need be done only once given the set $\{x_n\}_{n=1}^N$ of training data points.
After that, each product takes $\bigO(M \log M)$ effort.

Returning to the system matrix \eqref{97},
we can now apply $\Xt\X$ in three steps:
i) diagonal multiplication by $D$, ii) multiplication by the
matrix $F^* F$ by Algorithm~\ref{a:FF}, finally iii) diagonal multiplication by $D$.
In an iterative solver such as conjugate gradients,
the effort per iteration to apply the system matrix
is thus $\bigO(M \log M)$,
with dominant cost per iteration being two FFTs each of size approximately $2^dM$.
In particular, the iteration cost is independent of $N$.

Finally we need a recipe for filling the right-hand side
for the linear system \eqref{be}.
This is the vector $\Xt\mbf{y} = D F^* \mbf{y}$,
where the vector with elements
\be
(F^*\mbf{y})_{\bj} := \sum_{n=1}^N e^{2\pi i \inner{h\bj}{x_n}} y_n,
\qquad \bj\in J_m,
\label{rhs}
\ee
may be computed by a $d$-dimensional type 1 NUFFT with $\bigO(N + M \log M)$
effort. The multiplication by $D$ is a negligible $\bigO(M)$.

By combining the above, the linear system \eqref{be} may be solved
with iterative methods
such as conjugate gradient \cite{dahlquist1}, provided that the matrix 
is not too poorly conditioned,
with $\bigO(M \log M)$ cost per iteration, after a one-time
$\bigO(N + M \log M)$ precomputation.
This idea is the core of the following full EFGP algorithm.
\begin{algorithm}[GP posterior mean using equispaced Fourier representation]\label{a1}
  \mbox{}   
  \begin{enumerate}
\item 
Given a tolerance $\varepsilon$,
determine $m,h$ such that $\tilde{k}$ is an $\varepsilon$-accurate
approximation of the kernel $k$.
We postpone the estimation of $m,h$ that satisfies this criterion
to Section~\ref{s6}.

\item
  Perform the precomputations from Algorithm~\ref{a:FF} above
  to fill $\hat{\mbf{v}}$, where $\mbf{v}$ is defined by \eqref{88}.
  This is dominated by a $d$-dimensional type 1 NUFFT of cost
  $\bigO(N + M \log M)$.
  
\item 
  Compute the right-hand side vector $\Xt \mbf{y}$ using \eqref{rhs}, dominated by
  another $d$-dimensional type 1 NUFFT of cost
  $\bigO(N + M \log M)$.
  
\item 
  Use conjugate gradient to solve the linear system \eqref{be}
  for the vector $\bbe$,
  stopping at relative residual $\rho\approx\varepsilon$.
  At each iteration,
  Algorithm~\ref{a:FF} and diagonal multiplications as above are used to apply the system
  matrix $\Xt\X + \sigma^2 I$ to a vector with cost $\bigO(M \log M)$.

\item
  For the mean $\mu(\xz)$ at a single new target $\xz\in D$,
  simply evaluate \eqref{muws}.

\item
  For the mean at a large set of new targets $z_1,\dots,z_q\in D$,
  use a single $d$-dimensional type 2 NUFFT
  with coefficients $\sqrt{h^d \hk(h\bj)} \beta_\bj$
  to evaluate all Fourier sums \eqref{muws} with $\bigO(q + M \log M)$ cost.
  
\item
  For the mean vector
  at the observation points, $\bmu := \{\mu(x_n)\}_{n=1}^N$, 
  use a single $d$-dimensional type 2 NUFFT
  to evaluate
  \be
  \bmu = \X \bbe,
  \label{bmu}
  \ee
  which is a special case of the previous step with $q=N$.
\end{enumerate}
\end{algorithm}

Here the mean formula \eqref{bmu} follows either from \eqref{muws},
or by inserting \eqref{magic} into $\bmu = K\bal$ from \eqref{mu}.

\begin{rmk}
In many applications, hyperparameters of the covariance function are fit 
to the data using methods that require the gradient of the Gaussian 
process likelihood function in addition to the determinant of the matrix
$K + \sigma^2I$. 
Methods for the evaluation of these quantities is the subject of a large 
literature including, for example, \cite{minden1, gardner1, wenger2021}. 
We leave the evaluation of these quantities using 
equispaced Fourier representations to future work.
\end{rmk}

\section{Error analysis}\label{s6}

In this section we bound the kernel approximation error
due to discretization,
state choices of numerical parameters that achieve this bound for the SE and Mat\'ern
kernels,
then bound the error (bias) in the posterior mean due to both kernel and linear
system solution errors,
relative to ``exact'' inference using the true kernel.

\subsection{Kernel discretization error}
\label{s:discr}

Aside from its computational advantages,
the equispaced choice of Fourier frequencies has precise yet elementary
error estimates. 
The following applies to any integrable
and somewhat smooth kernel, in exact arithmetic.
\begin{pro}[Pointwise kernel error]  
  \label{p:pw_err}
  Suppose that the covariance kernel $k: \R^d \to \R$
  and its Fourier transform $\hat{k}$
  decay uniformly as
  $|k(x)|\le C(1+\|x\|)^{-d-\delta}$ and
  $|\hat{k}(\xi)|\le C(1+\|\xi\|)^{-d-\delta}$ for some $C$, $\delta>0$.
  Let $h>0$, $m\in\N$, then define $\tilde k$ by the discrete sum \eqref{110}.
  Then for any $x \in \R^d$ we have
  \begin{align}
    \label{ptwise_err}
  \tilde{k}(x) - k(x)
  \;\;=\;\;
  -
  \underbrace{\sum_{n \in \Z^d, \, n \neq \mbf{0}} k\left(x + \frac{n}{h} \right)}
  _\text{aliasing error}
  \;\;+\;\;
  \underbrace{h^{d} \sum_{j \in \mathbb{Z}^{d}, \, j \not \in J_{m}} \hat{k}(j h)e^{2\pi i h \inner{j}{x}}}_\text{truncation error}
  \,.
\end{align}
\end{pro}
\begin{proof}
  Writing \eqref{110} using the argument $x$ instead of $x-x'$ gives
  $\tilde{k}(x) = h^{d} \sum_{j \in J_{m}} \hat{k}(j h)e^{2\pi i h  \inner{j}{x} }$.
  Subtracting from this expression
  a shifted and scaled version of the Poisson summation formula
  \cite[Ch.~VII, Cor.~2.6]{steinweissbook},
\begin{align}\label{psf}
h^{d} \sum_{j \in \mathbb{Z}^{d}} \hat{k}(jh)e^{2\pi h i \inner{j}{x}} = 
\sum_{n \in \Z^{d}} k\biggl(x + \frac{n}{h} \biggr),
\end{align}
and separating the $n=\mbf{0}$ term, completes the proof.
\end{proof}

This proposition states that the kernel approximation error has two contributions:
\ben
\item
The first term is interpreted as the {\bf aliasing error} due to
the nonzero quadrature node spacing $h$, and can be made
small by sending $h$ to zero.
It takes the form of a punctured sum over periodic images of the kernel $k$,
thus is sensitive to $x$.
To make use of it, a uniform estimate over $x\in[-1,1]^d$ is necessary,
this being the set that $x-x'$ takes when $x$ and $x'$ range over the physical domain
$D = [0,1]^d$.
For a kernel with rapid decay over a scale $\ell \ll 1$, this aliasing error
is dominated by the images nearest to $[-1,1]^d$, so is uniformly
small once $h^{-1}$ exceeds $1$ by a few times $\ell$.
\item
The second term on the right-hand side of \eqref{ptwise_err} is the
{\bf truncation error} due to restricting the Fourier integral to the finite box
$[-mh,mh]^d$.
It can be made uniformly small by choosing $mh$ large enough so that
the tail integral of $|\hk|$ is small over the exterior of this box,
and discarding the phases $e^{2\pi i h\inner{j}{x}}$.
\een
In practice we first set $h$ to achieve a certain uniform aliasing error, then
choose $m$ according to the decay of $\hk$ to achieve a uniform
truncation error of similar size.

\subsection{Choice of EFGP parameters $h$ and $m$ for two common kernels}
\label{s:choice}

Both the squared-exponential kernel (see \eqref{Gker}) and Mat\'ern
kernel (with $\nu\ge 1/2$, see \eqref{Cker})
satisfy the decay conditions in Proposition~\ref{p:pw_err}.
Thus uniform bounds on $|\tilde{k}(x)-k(x)|$
over $x\in[-1,1]^d$ are possible by bounding the
aliasing and truncation errors as outlined above.
The analysis is detailed, using integral bounds on lattice
sums, bounds on modified Bessel functions, and induction in dimension $d$,
and demands a separate work \cite{efgp_anal}.
Here we summarize the resulting
choices of $h$ (Fourier grid spacing) and $m$ (grid half-size per dimension)
to guarantee any uniform kernel error tolerance $\eps>0$.
In the Mat\'ern case we then state a heuristic choice that leads to better
performance.

Owing to the rapid decay of the SE kernel and its Fourier transform,
the aliasing error and the
truncation error can be bounded by the term with the smallest $|n|$ and smallest $|m|$ in \cref{ptwise_err}
respectively, up to a $d$-dependent constant.
In particular, we set $h=\big(1 + \ell \sqrt{2 \log (4d\,3^d/\eps)}\big)^{-1}$, then $m=\lceil\sqrt{\half \log (4^{d+1}d/\eps)}/\pi\ell h \rceil$, which 
guarantees that aliasing and truncation errors are each uniformly bounded by $\eps/2$.

On the other hand, the slow algebraic decay $\hat{k}(\xi) =\bigO( \|\xi\|^{-2\nu-d})$
of the Fourier transform of the Mat\'ern kernel
demands large $m$, particularly for small $\nu$.
In order to guarantee uniform error $\eps$, one would require
\be
h\leq \big(1 + \ell \sqrt{2d/\nu} \log (d\,3^d/\eps) \big)^{-1} , \quad \textrm{and} \quad m\geq (d\,5^{d-1}/\pi^{d/2} \eps)^{1/2\nu} \cdot (1.6)\sqrt{\nu}/\pi h \ell.
\label{maternhm}
\ee
However, the bound for the accuracy of the posterior mean in terms of the uniform
error of the covariance kernel is not sharp and tends to be an overestimate,
leading to an unnecessarily large $m$.
This motivates a heuristic $L^2$ kernel approximation error,
which, combined with extensive numerical 
experiments, leads us when $1/2 \leq \nu \leq 5/2$ to instead set
\be
h=\bigl(1+0.85 (\ell/\sqrt{\nu}) \log 1/\tilde{\eps} \bigr)^{-1} , \quad \textrm{and}  \quad m=\left\lceil \frac{1}{h} \biggl( \pi^{\nu+d/2} \ell^{2\nu} \frac{\tilde{\eps}}{0.15}\biggr)^{-1/(2\nu + d/2)} \right\rceil,
\label{maternheur}
\ee
where $\tilde{\eps} := \eps \|k\|_{L^2[-1,1]^d}$
rescales the error so that $\eps$ is interpreted as
the {\em relative} precision of $L^2$ kernel approximation.
For fixed $\nu$, $\ell$, and $h$, this estimate
implies $m=\bigO(1/\eps^{1/(2\nu + d/2)})$, as compared to the uniform estimate
\eqref{maternhm}
for which $m=\bigO(1/\eps^{1/2\nu})$; the former is more forgiving (hence computationally efficient) for small values of $\nu$, particularly in two or three dimensions.
For $\nu>5/2$ we revert to \eqref{maternhm}.

The heuristic choice \eqref{maternheur}
relies on Monte Carlo approximation of the Frobenius
norm $\|\tilde{K} - K\|_F$ in terms of $\|k\|_{L^2[-1,1]^d}$,
requiring assumptions on the distribution of points $\{x_n\}$.
See \cite{efgp_anal} for a derivation and numerical tests of the estimate.
In practice, the above choices of $h$ and $m$ result in closer agreement of
EEPM with the user-requested requested tolerance $\eps$.

\subsection{Error analysis of posterior mean prediction}
\label{s:meanacc}

In this section, we bound the error in the posterior mean $\mu$ 
(relative to exact inference) induced by two sources of error: kernel approximation,
and the residual when solving the weight space linear system.
This adds to the literature on GP errors (bias) induced by an incorrect
GP prior covariance, which is generally asymptotic and specific
to some random model for data points
(e.g., \cite{stein1990,  sanz-alonso2021, vandervaart2008}).

We start by estimating the spectral norm $\|\tilde K - K\|$ of the difference
between the true covariance kernel matrix $K$ and its finite-rank
equispaced Fourier approximation $\tilde K = \X\Xt$,
for arbitrary data points. 
\begin{pro}   
  \label{p:Kerr}
  Let the data points be $x_1,\dots,x_N \in D$.
  Let $K$ be their covariance matrix using the true kernel $k(x,x')$,
  and $\tilde K = \X \Xt$ be its rank-$M$
  approximation using a kernel $\tilde k(x,x') = \sum_{j=1}^M \phi_j(x)\overline{\phi_j(x')}$,
  with uniform approximation error $|k(x,x')-\tilde k(x,x')| \le \eps$ for all $x,x'\in D$.
  Then
  \be
  \|\tilde K - K\| \le \|\tilde K-K\|_F \le N \eps~.
  \label{Kerr}
  \ee
\end{pro}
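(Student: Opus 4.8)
The plan is to chain two elementary inequalities: the standard bound of the spectral norm by the Frobenius norm, followed by a crude entrywise bound on the Frobenius norm using the uniform approximation hypothesis. First I would recall that $\|A\| \le \|A\|_F$ for any matrix $A$ (immediate from the singular value decomposition, since $\|A\|^2 = \sigma_{\max}(A)^2 \le \sum_i \sigma_i(A)^2 = \|A\|_F^2$); here $A = E := \tilde K - K$. No positivity or symmetry of $E$ is needed for this, though one could alternatively note that $E$ is Hermitian (both $K$ and $\tilde K = \X\Xt$ are) and argue via eigenvalues instead of singular values.

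Next I would write the Frobenius norm entrywise. Since $(\tilde K)_{nl} = \sum_{j=1}^M \phi_j(x_n)\overline{\phi_j(x_l)} = \tilde k(x_n, x_l)$ and $K_{nl} = k(x_n, x_l)$, the $(n,l)$ entry of $E$ is $\tilde k(x_n,x_l) - k(x_n,x_l)$, whose magnitude is at most $\eps$ by hypothesis because $x_n, x_l \in D$. Summing the squares of all $N^2$ entries gives
\[
\|E\|_F^2 \;=\; \sum_{n,l=1}^N \bigl|\tilde k(x_n,x_l) - k(x_n,x_l)\bigr|^2 \;\le\; N^2 \eps^2,
\]
hence $\|E\|_F \le N\eps$, and combining with the first step yields $\|\tilde K - K\| \le \|\tilde K - K\|_F \le N\eps$.

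There is no real obstacle here; the argument is only a few lines. The one point worth flagging is that the $N\eps$ bound is pessimistic: equality in the chain would require every entry of $E$ to have magnitude exactly $\eps$ with phases aligned so that $E$ is effectively rank one, which the oscillatory aliasing and truncation error terms of Proposition~\ref{307} will not produce in practice. Extracting a sharper (e.g.\ $\sqrt{N}\,\eps$-type) estimate would require exploiting cancellation in those error terms rather than the triangle-inequality-style estimate used above; that refinement is precisely the aim of the heuristic developed in the remainder of this subsection, but it is not needed for the rigorous worst-case guarantee stated here.
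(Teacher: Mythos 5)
Your argument is correct and matches the paper's proof exactly: bound the Frobenius norm by summing the $N^2$ entries, each of magnitude at most $\eps$ by the uniform kernel error hypothesis, then use $\|\cdot\|\le\|\cdot\|_F$ to conclude. The extra remark about the pessimism of the $N\eps$ bound is consistent with the paper's subsequent heuristic discussion but is not part of the proof itself.
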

The proof simply sums the $N^2$ matrix elements of size at most $\eps$,
then bounds the spectral norm
by the Frobenius norm $\|\cdot\|_F$ \cite[\S~2.3.2]{golubvanloan}.
For equispaced Fourier approximation,
Proposition~\ref{p:Kerr} immediately applies
using the $\eps$ (sum of aliasing and truncation errors) from Sections~\ref{s:discr}--\ref{s:choice}.

Although EFGP computes in weight space, the following will be useful,
showing that the above spectral error $\|\tilde K - K\|$ controls the
relative error in the function space coefficients.

\begin{lem}[Stability under covariance matrix perturbation]\label{l:cov_perturb}
  Let $\mbf{y} \in \R^N$, let $K$, $\tilde{K}$ be
  positive semi-definite matrices, let $\sigma>0$,
  and let $\bal,\tbal \in \R^N$ be the unique solutions to 
  \be
  \label{370}
(K + \sigma^2 I) \bal = \mbf{y} \text{,} \qquad (\tilde{K} + \sigma^2 I) \tbal = \mbf{y}~.
\ee
Then the relative norm of the difference in solution vectors is bounded by
\be
  \| \tbal - \bal \| / \| \bal \|
  \;\le\;
  \| (\tilde K + \sigma^2 I)^{-1} \| \, \| K - \tilde{K} \|
  \;\le\;
  \sigma^{-2} \| K - \tilde{K} \|
  ~.
  \label{alerr}
\ee
where $\| \cdot \|$ denotes the $\ell^2$-norm for vectors and the spectral norm for 
matrices.
\end{lem}
\begin{proof}
  Eliminating $\mbf{y}$ in \eqref{370} gives
  $
  (\tilde K + \sigma^2 I)
  (\bal-\tbal) = (\tilde K - K)\bal
  $.
  Left-multiplying by $(\tilde K + \sigma^2 I)^{-1}$ then using the properties
  of norms gives the first inequality. The second inequality follows
  by diagonalizing $\tilde K$, whose eigenvalues must be nonnegative.
\end{proof}

Finally, our main result bounds the posterior mean error induced by
approximating the kernel with uniform error $\eps$ and by
solving the weight-space linear system to relative residual $\rho$.
Recall that the posterior mean may be evaluated either
i) at the given data points $\{x_n\}_{n=1}^N$,
or ii) at new data points (denoted by $x$ in the Introduction).
In practice (see Algorithm~\ref{a1})
both tasks are performed by evaluating \eqref{muws},
given $\hat{\bbe}$, a weight space vector approximately solving \eqref{be}.
Since improved error estimates are possible for task i), we give separate
statements.
It is also now convenient to assume that the kernel has been
scaled to (at most) unit prior variance.

\begin{theorem}[Stability of posterior means] 
  \label{t:muerr}
  Let $\mbf{y}\in\R^N$, $\sigma>0$, let $k$ be a positive covariance kernel
  with $|k(x)|\le 1$, $\forall x\in\R^d$,
  with $K$ and its rank-$M$ approximation $\tilde K=\X\Xt$
  as in Proposition~\ref{p:Kerr} with uniform error $\eps\ge0$,
  and let $\rho\ge0$ be the weight space relative residual norm, that is,
  the numerical coefficient vector $\hat{\bbe}\in\C^M$
  satisfies
  \be
  (\tilde K + \sigma^2I)\hat{\bbe} \;=\; \Xt\mbf{y} + \mbf{r},
  \qquad \mbox{ for some residual $\mbf{r}$ with }
  \quad \|\mbf{r}\|/\|\Xt \mbf{y}\| = \rho.
  \label{beresid}
  \ee
  i) Let $\bmu$ be the vector of true posterior means at the data points,
  and $\hbmu = \X \hat{\bbe}$ be its EFGP approximation as in \eqref{bmu}.
  Then its error (relative to the data magnitude) obeys
\be\label{374}
  \frac{\| \hat\bmu-\bmu \|}{\| \mbf{y} \|}
  \;\le\;
  \frac{N}{\sigma^2} (\eps + \rho + \eps\rho)
  \qquad \mbox{ (relative error of posterior mean at the data points).}
\ee
ii) Let $\mu(\xz)$ be the true posterior mean at a new target $\xz\in\R^d$,
and let $\hat\mu(\xz) = \sum_{j=1}^M \hat{\beta}_j\phi_j(x)$
be its EFGP approximation.
Then its error relative to the data magnitude obeys
\be\label{muerrnew}
\frac{|\hat\mu(\xz)-\mu(\xz)|}{\|\mbf{y}\|/\sqrt{N}}
\;\le\;
\frac{N^2}{\sigma^4}\eps + \frac{N}{\sigma^2} ( \eps + \rho + \eps\rho)
\qquad \mbox{(relative error at a test target).}
\ee
\end{theorem}
\begin{proof}
  We first bound the effect of the residual: let
  $\tilde\bbe$ be the unique weight space solution
  to $(\tilde K+\sigma^2I)\tilde\bbe = \Xt\mbf{y}$, then by subtraction of this
  equation from \eqref{beresid} we bound
  \be
  \|\hat{\bbe}-\tilde\bbe\| \;\le\;
  \|(\tilde K+\sigma^2I)^{-1}\|\, \|\mbf{r}\|
  \;\le\;
  \sigma^{-2} \cdot \rho\|\Xt\mbf{y}\|.
  \label{beer}
  \ee
  To handle task i), we apply the above to get the bound
  \be
  \|\X(\hat{\bbe}-\tilde\bbe)\| \;\le\;
  \|\X\| \cdot \sigma^{-2} \rho\|\Xt\mbf{y}\|
  \;\le\; 
  \frac{\rho}{\sigma^2} \|\X\|^2 \|\mbf{y}\|
  \;\le\;
  \frac{N}{\sigma^2} (1+\eps)\rho \|\mbf{y}\|,
  \label{Xbeer}
  \ee
  where we used $\|\Xt\|^2 = \|\X\|^2 = \|\tilde K\|$, which can be checked
  by inserting the singular-value
  decomposition $\X = U\Sigma V^*$ into $\tilde K = \X\Xt$,
  then used $\|\tilde K\|\le N(1+\eps)$.
  The latter follows since $\tilde K = (\tilde K-K) + K$
  so that $\|\tilde K\| \le \|\tilde K - K\| + \|K\|$,
  then one applies Proposition~\ref{p:Kerr} and the bound
  $\|K\| \le \|K\|_F \le N$ coming from the unit bound on matrix entries.

  Now letting $\bal$ and $\tbal$ be the unique solutions to the function space systems
  in \eqref{370}, then $\bmu=K\bal$, so the mean error vector at the data points is
  \be
  \hat{\bmu} - \bmu \;=\;
  \X \hat{\bbe} - K\bal
  \;=\;
  \X (\hat{\bbe} -\tilde{\bbe}) + (\X\tilde\bbe - K\bal).
  \label{bmuerr}
  \ee
  By weight-space equivalence
  (Lemma~\ref{l:equiv} applied to $\tilde X$) we have $\X\tilde\bbe = \tilde K \tbal$,
  so that the second bracketed term becomes
  $\tilde K \tbal - K\bal$, which equals $\sigma^2(\bal-\tbal)$ using \eqref{370}.
  We bound its norm by combining Lemma~\ref{l:cov_perturb} then Proposition~\ref{p:Kerr}
  to get
  \be
  \sigma^2 \|\bal-\tbal\| \;\le\;
  \sigma^2 \cdot \sigma^{-2} \|\tilde K - K\|\, \|\bal\|
  \;=\; N\eps \|\bal\|
  \;\le\;
  \frac{N}{\sigma^2} \eps \|\mbf{y}\|,
  \label{balerr}
  \ee
  where in the last step we used a naive coefficient norm bound
  $\|\bal\| \le \| (K + \sigma^2 I)^{-1} \| \|\mbf{y}\| \le \sigma^{-2} \|\mbf{y}\|$.
  Applying the triangle inequality to \eqref{bmuerr},
  the first term is bounded by \eqref{Xbeer} and the second bracketed term by \eqref{balerr}.
  Adding the two terms proves \eqref{374}.

  Turning to task ii), we recall from \eqref{mu} that $\mu(\xz) = \kz\tran\bal$
  where $\kz := [k(\xz, x_1), \dots, k(\xz, x_N)]\tran$. 
  We also define the vector $\bpz := [\phi_1(\xz), \dots,\phi_M(\xz)]\tran$.
  We start with the identity
  $$
  \hat\mu(\xz)-\mu(\xz)\;=\;
  \bpz\tran\hat\bbe -  \kz\tran\bal
  \;=\;
  \bpz\tran(\hat\bbe - \tilde\bbe)
  + (\bpz\tran\tilde\bbe - \kz\tran\tbal)
  + \kz\tran(\tbal - \bal).
  $$
  By weight-space equivalence, $\bpz\tran\tilde\bbe = \tkz\tran\tbal$.
  Applying the triangle then Cauchy--Schwarz inequalities to the three
  terms thus gives
  $$
  |\hat\mu(\xz)-\mu(\xz)|  \;\le \;
  \|\bpz\| \|\hat{\bbe}-\tilde\bbe\| +
  \|\tkz - \kz\| \|\tbal\| +
  \|\kz\| \|\tbal-\bal\|.
  $$
  Tackling the first term, we combine \eqref{beer} with
  $\|\bpz\|^2 = \sum_{\bj\in J_m} h^d \hk(h\bj) = \hat{k}(\mbf{0}) \le
  k(\mbf{0}) + \eps \le 1+\eps$, and the above bound on $\|\Xt\|$
  to get its bound $\sigma^{-2}\sqrt{N}\|\mbf{y}\|(1+\eps)\rho$.
  For the second term we combine
  $\|\tkz - \kz\| \le \sqrt{N}\eps$,
  which follows from the uniform kernel approximation,
  with $\|\tbal\| \le \sigma^{-2}\|\mbf{y}\|$,
  to get its bound $\sigma^{-2}\sqrt{N}\|\mbf{y}\| \eps$.
  The last term uses $\|\kz\|\le\sqrt{N}$ and \eqref{balerr},
  and ends up dominating the other two terms by a factor $N/\sigma^2$.
  Summing the three terms and 
  dividing by $\|\mbf{y}\|/\sqrt{N}$ proves \eqref{muerrnew}.
\end{proof}
 
We remind the reader that, although the above proof made use of
the function space vectors $\bal$ and $\tbal$, the numerical
method works entirely in weight space, for which linear system
$\hat\bbe$ is an approximate solution vector.
The appearance of $\rho$ with a similar factor as $\eps$ in
Theorem~\ref{t:muerr} justifies our
choice of $\rho \approx \eps$ in Algorithm~\ref{a1}.

Note that the
$\eps$-dependent error bound
for new targets (task ii) is a factor $N/\sigma^2$ worse than for
prediction at the given data points (task i).
The underlying reason is that for task i) the ``norm-reducing'' trick
$\tilde K \tbal- K\bal = \sigma^2(\bal-\tbal)$, due to the difference
of linear systems \eqref{370}, has no known analog in task ii) for terms like
$\tkz\tran\tbal - \kz\tran\bal$ or even
$\tkz\tran(\tbal - \bal)$.
By adding conditions (e.g., iid random) on the data points, it is possible
that an approximate orthogonality
$\|\Xt(\tilde K-K)\| \ll \|\X\|\|\tilde K -K\|$,
and/or a smaller $\ell$-dependent bound on $\|\kz\|$, could be exploited.
Yet, in Section~\ref{s:numerics} we will see that
in practice Fourier approximation of relatively
smooth kernels causes {\em no worse error at new targets} than at the data points
themselves.

\section{Numerical experiments on GP regression}
\label{s:numerics}

In this section we benchmark EFGP, the proposal of this paper, in four
ways.
Section~\ref{s:conv} tests the accuracy and scaling of EFGP itself;
Section~\ref{s:comp} compares the run-time vs accuracy
tradeoffs for EFGP against those of three state-of-the-art other methods;
Section~\ref{s:co2} demonstrates EFGP in a real 2D data application; finally
Section~\ref{s:big} tests the scaling of EFGP for massive data sizes
$N$ (up to a billion) while comparing its performance to the performance of its nearest competitor.
Our numerical experiments use one of the following two common
families of kernels \cite{rasmus1}:
\bi
\item The squared exponential kernel with length scale $\ell$
  (using $|\cdot|$ for Euclidean norm),
\begin{equation}
  G_\ell(x) :=  \exp{\left(-\frac{|x|^2}{2\ell^2} \right)}.
  \label{Gker}
\end{equation}
\item
The Mat\'ern kernel with smoothness parameter $\nu\ge 1/2$ and length scale $\ell$,
\begin{equation}
  C_{\nu,\ell}(x) := \frac{2^{1-\nu}}{\Gamma(\nu)} \left(\sqrt{2\nu} \frac{|x|}{\ell} \right)^{\nu} K_{\nu} \biggl(  \sqrt{2\nu}\frac{|x|}{\ell}\biggl),
  \label{Cker}
\end{equation}
where $K_{\nu}$ is the modified Bessel function of the second kind.
\ei

In all sections apart from Section~\ref{s:co2}, we use synthetic data generated in the following manner.
We use as an underlying test function the simple plane wave
\begin{equation}
\label{eq:fdef}
f(x) = \cos (2\pi \inner{x}{\omega}+ 1.3),  \qquad x \in [0,1]^d,
\end{equation}
where $\omega\in\R^d$ is some wavevector
with length $|\omega| \in [3, 6.2]$ depending on the experiment.
Thus $f$ has several oscillations across the domain,
while also remaining smooth enough for
consistency with the prior covariance model for the tested kernel
length scale $\ell = 0.1$.
The constant $1.3$ imparts a generic phase shift.
We have tried other smooth $f$ functions and find little difference
in the conclusions as long as $f$ is not too oscillatory.

Training points $x_1,\dots,x_N \in [0,1]^d$ are chosen uniformly iid, from
which noisy data $y_{n} = f(x_{n}) + \epsilon_{n}$ is generated
with iid draws $\epsilon_{n} \sim \mathcal{N}(0,\sigma^2)$.
Thus the noise level is consistent with $\sigma$ used for GP regression.
The variance of $y_n$ data values (for $\sigma<1$) is around 1.

Test targets $\xt_1,\dots,\xt_{N_t} \in [0,1]^d$
(here $\ast$ denotes ``test'' rather than conjugate),
are chosen to lie on the $d$-dimensional tensor product of the 1D uniform grid
$\{0,1/n_t,2/n_t,\dots,(n_t-1)/n_t\}$, so that $N_t = n_t^d$.
We will specify $n_t$ for each experiment below.
Synthetic held-out noisy test data is then generated via
$\yt_{n} = f(\xt_{n}) + \epsilon_{n}$ with $\epsilon_n$ drawn as above.

We will report two types of error metrics:
\ben
\item EEPM (estimated error in posterior mean).
  This is the difference (or bias) between a computed
  posterior mean $\hat\mu$
  and an estimate of the reference posterior mean $\mu$ from exact GP inference.
  For small $N\le 10^4$ we used a dense direct solve in function space
  (the traditional definition of ``exact'') to get the reference $\mu$,
  while for larger $N$ we used EFGP with tolerance $\eps$ pushed small enough to
  insure convergence to a negligible error (giving a ``self-convergence'' accuracy estimate, standard in numerical analysis).
  We report root mean square error, and use the subscript ``new'' to indicate the new held-out test targets,
  otherwise the training points are assumed:
  \[
  \textrm{EEPM} := \biggl(\frac{1}{N} \sum_{n=1}^{N} [\hat\mu(x_{n}) - \mu(x_{n})]^2 \biggl)^{1/2} \!\!\!\! ,
  \quad
  \textrm{EEPM}_\text{new} := \biggl(\frac{1}{N_t} \sum_{n=1}^{N_t} [\hat\mu(\xt_{n}) - \mu(\xt_{n})]^2 \biggl)^{1/2}  \!\! .
  \]
  These two measures of accuracy are precisely, in the case of data $y_n$ of unit variance, the quantities bounded by Theorem~\ref{t:muerr}.

\item RMSE (root mean square data prediction error).
  This is a standard measure of the predictive ability of a particular GP regression computation.
  It is only meaningful for the held-out test points $\xt_n$.
  With no subscript it indicates a tested method (using
  $\hat\mu(x)$ as the predictor), whereas
  with the subscript ``ex'' it indicates exact GP inference
  (using the reference $\mu(x)$ as the predictor):
  \[
  \textrm{RMSE} := \biggl(\frac{1}{N_t} \sum_{n=1}^{N_t} [\hat\mu(\xt_{n}) - \yt_n]^2 \biggl)^{1/2} \!\!\!\! ,
  \qquad
  \textrm{RMSE}_\text{ex} := \biggl(\frac{1}{N_t} \sum_{n=1}^{N_t} [\mu(\xt_{n}) - \yt_n]^2 \biggl)^{1/2}  \!\! .
  \]
  Here the reference $\mu(x)$ is computed as for EEPM.
\een

It is worth motivating inclusion of the EEPM metric.
While RMSE is ubiquitous in the literature,
it is not a very precise indicator of whether an approximate method
is solving the GP problem
correctly,
because it does not converge to zero as the method approaches exact inference.
Rather, for a well-specified model,
it is expected to tend to $\sigma$ as $N\to\infty$ and $N_t\to\infty$,
as with exact GP inference.
Thus, while it {\em can} be used as a high-accuracy convergence test
\cite[Fig.~8]{oneil1}, it tells only part of the story.
For example, in GP applications where the kernel locally averages over many data
points, the expected variation in the posterior mean can be $\ll \sigma$,
but RMSE (on noisy held-out data) does not measure this sub-$\sigma$ accuracy.
In contrast, EEPM explicitly measures the approximation accuracy for a given GP
regression implementation.

All numerical experiments were run on a 2.6 GHz Intel i7-8850H MacBook Pro
(6 physical cores) with $16$ GB RAM, running MATLAB R2021b,
with the exception of the results in Section~\ref{s:big} which were run on a
desktop with two 3.4 GHz Intel Xeon Gold 6128 Skylake CPUs
(12 physical cores total) and $192$ GB RAM, running R2021a.

Our MATLAB/Octave
implementation of EFGP, wrappers to the three other methods tested, and all
scripts for generating the results in this
paper are available at~\url{https://github.com/flatironinstitute/gp-shootout}.

\begin{table}[ht!]
\centering
  \resizebox{\textwidth}{!}{%
\begin{tabular}{ccccccccccccc}
   $N$ & $d$ & kernel & $\varepsilon$ & $m$ & pre (s) & solve (s) & mean (s) & tot (s)  & iters & $\textrm{EEPM}_{\textrm{new}}$ & RMSE & RMSE$_{\textrm{ex}}$\\ 
\hline
$10^{3}$ &$1$ &SE &$10^{-4}$ &$13$ & $0.002$ & $ 0.001 $ & $ 0.001 $ & $ 0.003 $ & $15$ & $4.9\times 10^{-4}$ & $3.0\times 10^{-1}$ & $3.0\times 10^{-1}$ \\ 
$10^{5}$ &$1$ &SE &$10^{-4}$ &$13$ & $0.010$ & $ 0.002 $ & $ 0.004 $ & $ 0.016 $ & $15$ & $3.2\times 10^{-3}$ & $3.0\times 10^{-1}$ & $3.0\times 10^{-1}$ \\ 
$10^{7}$ &$1$ &SE &$10^{-4}$ &$13$ & $0.802$ & $ 0.002 $ & $ 0.333 $ & $ 1.137 $ & $19$ & $1.3\times 10^{-3}$ & $3.0\times 10^{-1}$ & $3.0\times 10^{-1}$ \\ 
\hline \hline 
$10^{3}$ &$2$ &SE &$10^{-4}$ &$16$ & $0.007$ & $ 0.013 $ & $ 0.003 $ & $ 0.022 $ & $64$ & $1.2\times 10^{-4}$ & $3.3\times 10^{-1}$ & $3.3\times 10^{-1}$ \\ 
$10^{5}$ &$2$ &SE &$10^{-4}$ &$16$ & $0.020$ & $ 0.035 $ & $ 0.009 $ & $ 0.063 $ & $184$ & $9.2\times 10^{-4}$ & $3.1\times 10^{-1}$ & $3.1\times 10^{-1}$ \\ 
$10^{7}$ &$2$ &SE &$10^{-4}$ &$16$ & $1.478$ & $ 0.029 $ & $ 0.711 $ & $ 2.217 $ & $138$ & $1.8\times 10^{-3}$ & $3.1\times 10^{-1}$ & $3.1\times 10^{-1}$ \\ 
\hline \hline 
$10^{3}$ &$3$ &SE &$10^{-3}$ &$17$ & $0.074$ & $ 0.207 $ & $ 0.022 $ & $ 0.304 $ & $34$ & $1.3\times 10^{-3}$ & $4.1\times 10^{-1}$ & $4.1\times 10^{-1}$ \\ 
$10^{5}$ &$3$ &SE &$10^{-3}$ &$17$ & $0.107$ & $ 0.919 $ & $ 0.035 $ & $ 1.060 $ & $130$ & $3.4\times 10^{-3}$ & $3.0\times 10^{-1}$ & $3.0\times 10^{-1}$ \\ 
$10^{7}$ &$3$ &SE &$10^{-3}$ &$17$ & $2.583$ & $ 0.698 $ & $ 1.315 $ & $ 4.596 $ & $80$ & $6.9\times 10^{-3}$ & $3.0\times 10^{-1}$ & $3.0\times 10^{-1}$ \\ 
\hline \hline 
$10^{3}$ &$1$ &Mat $1/2$ &$10^{-4}$ &$1555$ & $0.016$ & $ 0.059 $ & $ 0.002 $ & $ 0.076 $ & $53$ & $2.0\times 10^{-3}$ & $3.1\times 10^{-1}$ & $3.1\times 10^{-1}$ \\ 
$10^{5}$ &$1$ &Mat $1/2$ &$10^{-4}$ &$1555$ & $0.016$ & $ 0.156 $ & $ 0.006 $ & $ 0.178 $ & $155$ & $1.7\times 10^{-2}$ & $3.0\times 10^{-1}$ & $3.0\times 10^{-1}$ \\ 
$10^{7}$ &$1$ &Mat $1/2$ &$10^{-4}$ &$1555$ & $1.097$ & $ 0.065 $ & $ 0.415 $ & $ 1.577 $ & $69$ & $6.8\times 10^{-3}$ & $3.0\times 10^{-1}$ & $3.0\times 10^{-1}$ \\ 
\hline \hline 
$10^{3}$ &$2$ &Mat $1/2$ &$10^{-3}$ &$97$ & $0.064$ & $ 0.285 $ & $ 0.004 $ & $ 0.353 $ & $41$ & $1.4\times 10^{-2}$ & $3.5\times 10^{-1}$ & $3.5\times 10^{-1}$ \\ 
$10^{5}$ &$2$ &Mat $1/2$ &$10^{-3}$ &$97$ & $0.058$ & $ 1.147 $ & $ 0.013 $ & $ 1.217 $ & $147$ & $5.8\times 10^{-2}$ & $3.1\times 10^{-1}$ & $3.2\times 10^{-1}$ \\ 
$10^{7}$ &$2$ &Mat $1/2$ &$10^{-3}$ &$97$ & $2.306$ & $ 0.494 $ & $ 1.223 $ & $ 4.023 $ & $53$ & $2.6\times 10^{-2}$ & $3.1\times 10^{-1}$ & $3.1\times 10^{-1}$ \\ 
\hline \hline 
$10^{3}$ &$3$ &Mat $1/2$ &$5\times 10^{-3}$ &$23$ & $0.349$ & $ 1.307 $ & $ 0.082 $ & $ 1.737 $ & $22$ & $5.6\times 10^{-2}$ & $4.6\times 10^{-1}$ & $4.8\times 10^{-1}$ \\ 
$10^{5}$ &$3$ &Mat $1/2$ &$5\times 10^{-3}$ &$23$ & $0.311$ & $ 2.815 $ & $ 0.076 $ & $ 3.201 $ & $88$ & $1.0\times 10^{-1}$ & $3.2\times 10^{-1}$ & $3.3\times 10^{-1}$ \\ 
$10^{7}$ &$3$ &Mat $1/2$ &$5\times 10^{-3}$ &$23$ & $4.090$ & $ 1.315 $ & $ 1.655 $ & $ 7.059 $ & $38$ & $4.6\times 10^{-2}$ & $3.0\times 10^{-1}$ & $3.0\times 10^{-1}$ \\ 
\hline \hline 
\end{tabular}
}%
  \caption{Run-time and accuracy for GP regression with EFGP on simulated data, for two kernels with $\ell=0.1$ and $\sigma=0.3$ (see Section~\ref{s:conv}).
  $N$ is the training data size, $d$ the dimension, $\eps$ the tolerance, and $m$ the number of positive Fourier nodes per dimension.
    ``pre'' is the CPU time for precomputation (steps 2--3 of Algorithm~\ref{a1}), ``solve'' is for step 4, ``mean'' is for step 6 evaluating at a grid of $60^d$ new targets, and ``tot'' the total.
    }
\label{t:efgp_only}
\end{table}

\subsection{Accuracy and CPU-time performance of EFGP\label{s:conv}}

Here we test EFGP for computing the posterior mean in GP regression.
We test two covariance kernels $k$:
the squared exponential (SE) defined in \eqref{Gker},
being an extremely smooth kernel
(rapid decay in Fourier space), and the
Mat\'ern kernel of \eqref{Cker}
with $\nu = 1/2$, being the least smooth of that family (slow algebraic Fourier
decay).
For Fourier methods the latter is thus the most challenging within the Mat\'ern family.
For all kernels the length scale $\ell=0.1$ was used,
and the prior variance was $k(\mbf{0})=1$.

We implemented EFGP, as described in Algorithms~\ref{a:FF} and \ref{a1},
in MATLAB/Octave, calling v2.1.0
of the FINUFFT package \cite{barnett1} for all nonuniform FFTs
using a NUFFT tolerance of $\eps/10$, where $\eps$ is the
EFGP tolerance.
Parameters $h$ and $m$ were chosen as in Section~\ref{s:choice}.
Both MATLAB (in particular for FFTs) and FINUFFT exploit multithreading.

We generated synthetic data as described above,
with wavevectors $\omega=3$ in 1D, $\omega=(3,6)/\sqrt{5}$ in 2D,
or $\omega = (3,9,6)/\sqrt{14}$ in 3D.   
We set $\sigma = 0.3$ for both additive noise and GP regression.
The test target grid has $n_{t} = 60$ nodes per dimension,
i.e.\ $N_t = 60^d$ total test points.
The tolerance $\eps$ was chosen as $10^{-4}$ by default, but
was increased for $d=3$, and also for $d=2$ with the Mat\'ern kernel,
to give more reasonable run-times for the reference computations.
The conjugate gradient relative residual stopping criterion $\rho$ was
set equal to the tolerance $\eps$ (we checked that lowering this caused
little improvement).

The results are in Table~\ref{t:efgp_only}, whose columns are defined in the caption.
Each of the six groups has a particular dimension, kernel, and tolerance
(and thus $m$). Within each group $N$ grows.
Notice that all total CPU times, even for $N=10^7$
(in prior literature considered unsolvably big)
are {\em only a couple of seconds on a laptop}.
They are dominated by the solve time (CG iteration using FFTs),
thus scale with $M$, hence are longer for higher $d$ and the non-smooth kernel (Mat\'ern).
Since $M$ remains constant when $N$ varies,
the solve times {\em per iteration} are also independent of $N$.
Precomputation times do scale with $N$,
asymptoting to linear with a throughput of $3\times 10^6$ to $10^7$ points per second.

For the SE kernel, the posterior mean accuracy EEPM$_{\textrm{new}}$
is almost always within one digit (a factor of ten) of the requested $\eps$.
For the more challenging Mat\'ern,
the EEPM$_{\textrm{new}}$ was larger than $\eps$ by one to two digits.
(The upper bound of $N^2 \eps / \sigma^4$ of Theorem~\ref{t:muerr}
is unfortunately useless here since it varies from $10^5$ to $10^{14}$.)

The RMSE is very close to the noise $\sigma=0.3$ (as it should be
when GP regression is a success),
except at smaller data sizes ($N\le 10^4$) when it can grow up to $0.48$.
Crucially, for all experiments, there is {\em no significant
difference} between RMSE and RMSE$_{\textrm{ex}}$,
thus EFGP at the chosen $\eps$ is essentially as good a predictor as the reference
(``exact'') GP.

Finally, the iteration counts seem first to grow with $N$, then to decrease
again at the highest $N$. 
The growth in the number of iterations as a function of $N$ is due to the 
increase in the condition number of the weight-space matrix. We observe
that condition numbers of the weight-space system are roughly $N/\sigma^2$
independent of the kernel, for small enough $\eps$. 
We leave a rigorous study of condition number to \cite{efgp_anal}. 
The decrease could be accounted for by growth in the lowest eigenvalue
of $\Xt\X$ relative to that of the true kernel $K$.

\begin{figure*}[ht!]
\centering
\begin{subfigure}{0.480\linewidth}
  \centering
\begin{tikzpicture}[scale=0.7]
\centering
\begin{axis}[
    ymin= -3.2, ymax=2,
    xmin=-13, xmax=0,
    ytick={-2, -1, 0, 1, 2},
    xlabel=$\log_{10} \textrm{EEPM}_{\textrm{new}}$, 
    xtick={-12, -10, -8, -6, -4, -2, 0}, 
    ylabel=$\log_{10} $ time (s),
    legend pos = south west,
    x dir=reverse
  ]
  
\addplot[color = red, mark=square*, line width=0.1mm]
    coordinates
    {
(-2.04, -2.13)
(-3.12, -2.17)
(-3.66, -2.23)
(-5.58, -2.22)
(-5.64, -2.1)
(-6.8, -1.98)
(-8.94, -1.95)
(-10.7, -1.99)
(-11.4, -1.92)
(-11.5, -1.85)
    };
    
\addplot[color = blue, mark=square*, line width=0.1mm]
    coordinates 
    {
(-0.331, -0.0396)
(-4.17, 0.18)
(-6.36, 0.201)
(-5.55, 0.311)
    };
  
\addplot[mark=square*, line width=0.1mm]
    coordinates 
    {
(-2.53, 0.382)
(-3.88, 0.382)
(-4.57, 0.4)
(-5.02, 0.356)
(-6.31, 0.413)
  };
  
  \addplot[color=green, mark=square*, line width=0.1mm]
    coordinates 
    {
(-3.35, 0.519)
(-4.39, 0.733)
(-4.93, 0.96)
(-5.02, 1)
};

\node at (-9, -2.7) {\footnotesize \textbf{more accurate}};
\draw[line width=0.4mm,-{latex[width=1mm, length=2mm]}] (-6.5,-3)--(-11.5,-3) node[right] {};

\node[rotate=270] at (-12.4, 0) {\footnotesize \textbf{faster}};
\draw[line width=0.4mm,-{latex[width=1mm, length=2mm]}] (-11.8, 1.0)--(-11.8,-1) node[right] {};

\legend{EFGP, SKI, FLAM, RLCM}
\end{axis}
\end{tikzpicture}
\caption{$1$D, squared-exponential kernel, $\ell = 0.1$}
\end{subfigure}
\begin{subfigure}{0.480\textwidth}
  \centering
\begin{tikzpicture}[scale=0.7]
\centering
\begin{axis}[
    ymin= -3, ymax=2.6,
    xmin=-14, xmax=0,
    ytick={-2, -1, 0, 1, 2},
    xlabel=$\log_{10} \textrm{EEPM}_{\textrm{new}}$, 
    xtick={-12, -10, -8, -6, -4, -2, 0}, 
    ylabel=$\log_{10} $ time (s),
    legend pos = south east,
    x dir=reverse
]

\addplot[color = red, mark=square*, line width=0.3mm]
    coordinates 
    {
(-1.42, -1.85)
(-1.62, -1.24)
(-2.3, -0.385)
(-3.26, 0.4)
(-4.26, 0.948)
  };
  
\addplot[color = blue, mark=square*, line width=0.3mm]
    coordinates 
    {
(-0.455, -0.0635)
(-1.41, 0.891)
(-1.76, 0.935)
(-2.24, 1.03)
  };
  
\addplot[mark=square*, line width=0.3mm]
    coordinates 
    {
(-13.4, 0.666)
(-13.3, 0.68)
(-13.1, 0.687)
(-13.2, 0.671)
(-13.3, 0.677)
(-13.2, 0.675)
  };

\addplot[color=green, mark=square*, line width=0.3mm]
    coordinates 
    {
(-2.07, 0.526)
(-2.21, 0.797)
(-2.32, 1.11)
  };
  
\legend{EFGP, SKI, FLAM, RLCM}
\end{axis}
\end{tikzpicture}
\caption{$1$D, Mat\'ern-1/2 kernel, $\ell = 0.1$}
\label{f1}
\end{subfigure}
\\
\vspace{0.5cm}
\begin{subfigure}{0.480\linewidth}
  \centering
\begin{tikzpicture}[scale=0.7]
\centering
\begin{axis}[
    ymin= -3, ymax=2.2,
    xmin=-9, xmax=1,
    ytick={-2, -1, 0, 1, 2},
    xlabel=$\log_{10} \textrm{EEPM}_{\textrm{new}}$, 
    xtick={-12, -10, -8, -6, -4, -2, 0}, 
    ylabel=$\log_{10} $ time (s),
    legend pos = south west,
    x dir=reverse
]

\addplot[color = red, mark=square*, line width=0.1mm]
    coordinates 
    {
(-7.85, -0.618)
(-6.59, -0.93)
(-5.78, -0.423)
(-4.79, -0.892)
(-3.88, -1.29)
(-2.86, -1.34)
(-1.87, -1.32)
  };
  
\addplot[color = blue, mark=square*, line width=0.1mm]
    coordinates 
    {
(-0.162, 1)
(-2.13, 1.36)
(-2.07, 2)
  };
  
\addplot[mark=square*, line width=0.1mm]
    coordinates 
    {
(-3.18, 0.827)
(-3.76, 0.83)
(-4.24, 0.979)
(-5.29, 1.26)
  };
  
  \addplot[color=green, mark=square*, line width=0.1mm]
    coordinates 
    {
(-1.22, 0.355)
(-1.62, 0.6)
(-1.87, 0.97)
(-2.19, 1.26)
(-2.78, 1.46)
  };
  
\legend{EFGP, SKI, FLAM, RLCM}
\end{axis}
\end{tikzpicture}
\caption{$2$D, squared-exponential kernel, $\ell = 0.1$}
\end{subfigure}
\begin{subfigure}{0.480\linewidth}
  \centering
\begin{tikzpicture}[scale=0.7]
\centering
\begin{axis}[
    ymin= -3, ymax=3.0,
    xmin=-8.1, xmax=1.1,
    ytick={-2, -1, 0, 1, 2},
    xlabel=$\log_{10} \textrm{EEPM}_{\textrm{new}}$, 
    xtick={-8, -6, -4, -2, 0}, 
    ylabel=$\log_{10} $ time (s),
    legend pos = south east,
    x dir=reverse
]

\addplot[color = red, mark=square*, line width=0.3mm]
    coordinates 
    {
(-0.951, -1.37)
(-1.01, -1.06)
(-1.37, -0.0776)
(-2.25, 1.64)
  };
  
\addplot[color = blue, mark=square*, line width=0.3mm]
    coordinates 
    {
(-0.147, 0.589)
(-0.948, 1.41)
(-1.2, 1.4)
(-1.08, 2.06)
  };
  
\addplot[mark=square*, line width=0.3mm]
    coordinates 
    {
(-3.28, 1.54)
(-4.23, 1.86)
(-5.18, 2.11)
(-6.11, 2.33)
(-6.95, 2.52)
  };

\addplot[color=green, mark=square*, line width=0.3mm]
    coordinates 
    {
(-1.41, 1.06)
(-1.51, 1.39)
(-1.54, 1.55)
  };
    
\legend{EFGP,SKI, FLAM, RLCM}
\end{axis}
\end{tikzpicture}
\caption{$2$D, Mat\'ern-1/2 kernel, $\ell = 0.1$}
\label{f8}
\end{subfigure}

\vspace{0.5cm}

\begin{subfigure}{0.480\linewidth}
  \centering
\begin{tikzpicture}[scale=0.7]
\centering
\begin{axis}[
    ytick={-2, -1, 0, 1, 2},
    xlabel=$\log_{10} \textrm{EEPM}_{\textrm{new}}$, 
    xtick={-12, -10, -8, -6, -4, -2, 0}, 
    ylabel=$\log_{10} $ time (s),
    legend pos = north east, 
    x dir=reverse
]

\addplot[color = red, mark=square*, line width=0.1mm]
    coordinates 
    {
(-7.45, 1.17)
(-6.52, 0.543)
(-5.45, 0.401)
(-4.54, 0.713)
(-3.49, 0.184)
(-2.43, 0.186)
  };
  
\addplot[color = blue, mark=square*, line width=0.1mm]
    coordinates 
    {
(-0.239, 1.91)
(-0.855, 1.95)
(-1.64, 1.97)
(-1.97, 1.99)
(-2.04, 1.98)
  };
  
\addplot[mark=square*, line width=0.1mm]
    coordinates 
    {
(-1.59, 1.67)
(-2.48, 2.09)
  };
  
  \addplot[color=green, mark=square*, line width=0.1mm]
    coordinates 
    {
(-0.646, 0.575)
(-0.704, 0.86)
(-0.79, 1.33)
  };
  
\legend{EFGP, SKI, FLAM, RLCM}
\end{axis}
\end{tikzpicture}
\caption{$3$D, squared-exponential kernel, $\ell = 0.1$}

\end{subfigure}
\begin{subfigure}{0.480\linewidth}
  \centering
\begin{tikzpicture}[scale=0.7]
\centering
\begin{axis}[
    ytick={-2, -1, 0, 1, 2, 3},
    xlabel=$\log_{10} \textrm{EEPM}_{\textrm{new}}$, 
    xtick={-4, -3, -2, -1, 0}, 
    ylabel=$\log_{10} $ time (s),
    legend pos = south east,
    x dir=reverse
]

\addplot[color = red, mark=square*, line width=0.3mm]
    coordinates 
    {
(-0.942, -0.212)
(-1.47, 1.85)
(-2.07, 3.49)
  };
  
\addplot[color = blue, mark=square*, line width=0.3mm]
    coordinates 
    {
(-0.162, 1.46)
(-0.848, 1.88)
(-0.866, 2.67)
  };
  
\addplot[mark=square*, line width=0.3mm]
    coordinates 
    {
(-1.86, 2.48)
(-2.38, 3.13)
  };

\addplot[color=green, mark=square*, line width=0.3mm]
    coordinates 
    {
(-0.94, 1.37)
(-1.01, 1.75)
(-1.08, 2.12)
  };
    
\legend{EFGP,SKI, FLAM, RLCM}
\end{axis}
\end{tikzpicture}
\caption{$3$D, Mat\'ern-1/2 kernel, $\ell = 0.1$}
\end{subfigure}
\caption{Comparisons of laptop run-time vs achieved posterior mean accuracy (EEPM$_{\textrm{new}}$) for GP regression
 in $d=1$, $2$, and $3$ dimensions, for four methods (see Section~\ref{s:comp}).
 The horizontal axis is interpreted as minus the number of correct digits.
 For each method, the curve was produced by varying a convergence parameter.
 In all cases $N=10^5$ simulated data points were used, with $\sigma = 0.5$.%
\label{fig:acc_v_time}
}
\end{figure*}
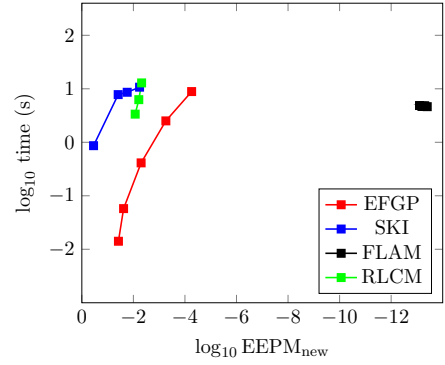
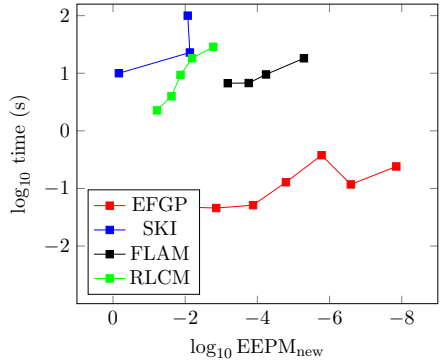
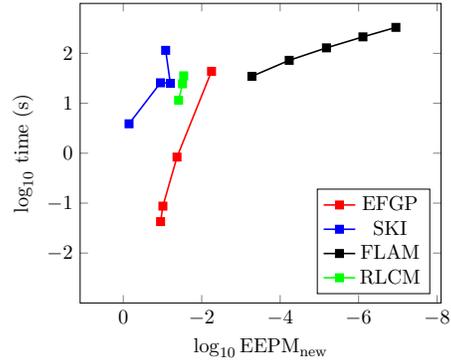
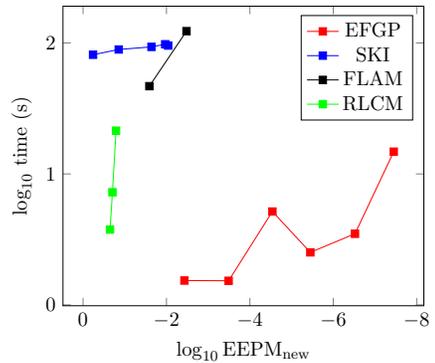
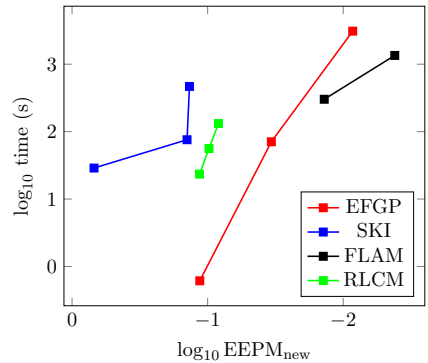

\subsection{Comparison of algorithms for GP regression}
\label{s:comp}

Here we fix $N=10^5$ in dimensions $1$, $2$, and $3$.
All aspects of the kernels
and synthetic data generation are as in the last subsection, except that
$\sigma=0.5$ is slightly larger, and the wavevectors were
$\omega=(4,3)$ in 2D and $\omega=(2,3,5)$ in 3D, up to twice as oscillatory
as before.

We benchmark EFGP against three popular state of the art methods,
chosen due to their desirable theoretical 
properties (recall~Table~\ref{t:comps}),
and efficient, user-friendly, publicly available implementations.
They represent a range of recent algorithms.
All three---indeed nearly all in the literature---are
function-space approaches, i.e., they solve the $N \times N$ linear
system in \eqref{362}. They are:
\bi
\item
  Structured Kernel Interpolation (SKI) \cite{wilson2},
  an iterative (BBMM) method using a Cartesian grid of
  inducing points with a fast apply via local cubic interpolation and the FFT.
  We wrote a MATLAB wrapper to the GPyTorch implementation \cite{gardner1}
  available at \url{https://github.com/cornellius-gp/gpytorch},
  using CPU only.\footnote{Note that GPyTorch could also exploit GPUs.}
  Accuracy was tuned via the number of inducing points per
  dimension.
\item
  Recursively Low-Rank Compressed Matrices (RLCM) \cite{chen1}, a linear-scaling fast direct solver
  implemented in C++ at
  \url{https://github.com/jiechenjiechen/RLCM}.
  We created a GitHub fork
  with additional variants of {\tt KRR\_RLCM} that can read and write
  data from binary files. These files were read/written by our MATLAB
  wrapper. We checked that the file I/O was a negligible part of the run-time.
  OPENMP was enabled, dense linear algebra used the
  OSX Accelerate framework, and we set {\tt diageps=1e-8}, {\tt refine=1},
  and {\tt par='PCA'}. Accuracy was tuned via the rank parameter.
\item
  A fast spatial GP maximum likelihood estimation method by Minden et al.\
  \cite{minden1}
  using a fast direct solver from
  ``Fast Linear Algebra in MATLAB'' (FLAM \cite{ho1}, available at
  \url{https://github.com/klho/FLAM}).
  We wrote a simple MATLAB implementation which
  factorizes $K+\sigma^2I$
  via {\tt rskelf} (recursive skeletonization), using a $d$-dimensional
  annulus containing $n_{\text{proxy}}=p^d$
  proxy points.
  A tunable accuracy parameter $\eps$ determined
  $p = \max\left(8,\lceil 0.44 \log(1/\eps) \rceil \right)$, and
  controlled the interpolative decomposition tolerance.
  We used {\tt occ=200} or {\tt 300} points per box.
\ei
See our GitHub repo (listed above)
for the precise wrappers and test codes, git submodule commits,
and configurations, allowing reproduction of the experiments.

Figure~\ref{fig:acc_v_time} shows the CPU time (vertical axis)
needed to achieve various posterior mean accuracies (EEPM$_\text{new}$,
horizontal axis), for the six possibilities
arising from the SE and Mat\'ern-$1/2$ kernels in $d=1$, $2$, and $3$.
Here the grid of test targets had
$n_{t} = 100$ points per dimension, except for $d=3$ where
$n_{t} = 30$.
The reference posterior mean $\mu(x)$
for SE kernels was
was computed using
EFGP with $\varepsilon = 10^{-14}$ in $d\le2$ and
$\varepsilon = 10^{-12}$ in $d=3$.
For Mat\'ern kernels, FLAM was used with $\eps=10^{-14}$ in $d=1$ and
$\eps=10^{-10}$ in $d=2$, while EFGP
with $\varepsilon = 5 \times 10^{-4}$ was used for $d=3$.

Figure~\ref{fig:acc_v_time} shows that for the SE kernel in all dimensions,
EFGP is the fastest method, usually by two orders of magnitude
at the same accuracy. EFGP also allow higher accuracies to be achieved
(7 of more correct digits) with minimal extra cost.
For Mat\'ern, EFGP is 1-2 orders of magnitude faster than SKI or RLCM
at the same accuracy, and is achieves around 1 extra digit of accuracy
than could be obtained with the other two codes.
For Mat\'ern in $d=1$ FLAM is ``exact'' (with rank 1, since the kernel
$e^{-|x-x'|/\ell}$ is semiseparable), so that 13-digit accuracy always results.
However, for Mat\'ern in all dimensions,
FLAM is the fastest approach for high accuracy computations while EFGP is still up to 2-3 orders of magnitude faster than FLAM for low precisions.

Note that the above results are for a
Mat\'ern parameter $\nu=1/2$, which is the {\em most challenging} for EFGP;
performance for $\nu>1/2$ is better because fewer Fourier nodes are needed.

\begin{figure*}[ht]
\begin{subfigure}{0.99 \linewidth}
  \centering
\includegraphics[height= 45mm]{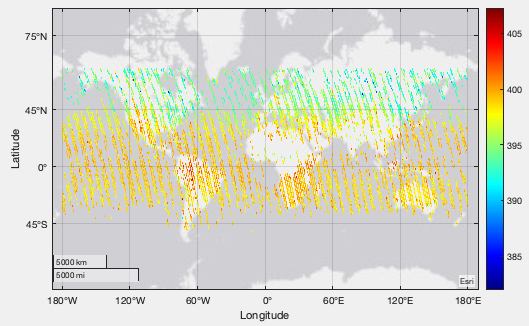}
\caption{XCO2 (ppm) raw data $\{y_n\}$ shown by color at the measurement points $\{x_n\}$.}
\label{fig:co2a}
\end{subfigure}
\par\bigskip
\begin{subfigure}{0.45 \linewidth}
  \centering
\includegraphics[height= 45mm]{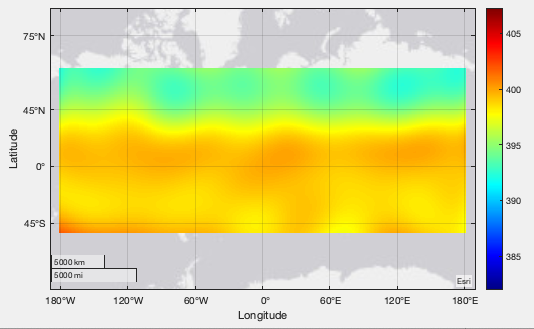}
\caption{Posterior mean $\hat\mu(x)$ for $\ell = 50$, EEPM$_{\textrm{new}}$ of $0.002$, and $0.5$ s laptop run time.}
\label{fig:co2b}
\end{subfigure}
\hspace{0.05\linewidth}
\begin{subfigure}{0.45 \linewidth}
  \centering
\includegraphics[height= 45mm]{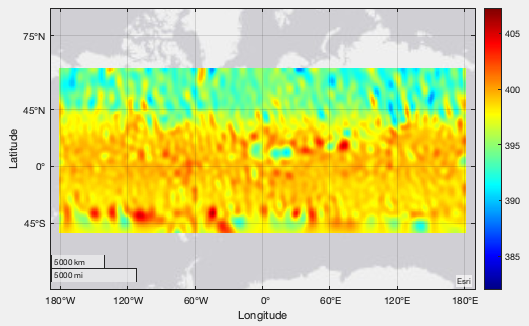}
\caption{Posterior mean $\hat\mu(x)$ for $\ell = 5$, EEPM$_{\textrm{new}}$ of $0.0002$, and $5$ s laptop run time.}
\label{fig:co2c}
\end{subfigure}

\caption{2D Gaussian process regression from
  $N \approx 1.4 \times 10^{6}$ measurement points of
  XCO2 data,
  using a squared-exponential kernel with two length scales $\ell$.
  The data was demeaned, and $\sigma^2 = 1$.
  The posterior mean $\hat\mu(x)$ is evaluated at $300\times 300$ target points.
}
\label{fig:co2}
\end{figure*}
 
\subsection{Spatial Gaussian process regression with 2D satellite data}
\label{s:co2}

In this section, we demonstrate the performance of EFGP on a 
standard large-scale GP regression in spatial statistics. 
The observed data \cite{xco2_data}
consists of $N\approx 1.4$ million 
satellite measurements of XCO2 (average
parts-per-million of $\text{CO}_2$ in an atmospheric column)
taken during the period of August 1--17, 2015,
as shown in Fig.~\ref{fig:co2}(a).
Cressie \cite{cressie2018}
describes this dataset in detail, in
addition to the data-generating process, the sources of noise,
and the importance of efficient spatial regression.
In accordance with standard practice, we use longitude-latitude coordinates 
in degrees as locations on a 2D plane. 
The observations $y_n \in (381, 408)$ are de-meaned, then
we ran GP regression to a $300 \times 300$ grid of targets,
using EFGP with the SE kernel
\begin{align}
k(x, x') \;=\; 9 e^{\frac{-|x - x'|^2 }{2\ell^2}},
\end{align}
i.e., \eqref{Gker} scaled to a prior variance of 9.
We ran two choices of length scale $\ell = 5$ and $50$,
which required total run-times of $5$ and $0.5$ seconds
respectively, giving EEPM$_{\text{new}}$ of 
$0.0002$ and $0.002$. Here the
EFGP algorithm was run with $\varepsilon = 10^{-7}$, 
and the reference posterior mean was computed using EFGP with $\varepsilon = 10^{-8}$.
For the $\sigma=1$ used,
CG required $\niter = 1916$ for $\ell=5$, and $\niter = 576$ for $\ell = 50$.
These large iteration counts are quite sensitive to the choice of
$\sigma$.
Figure \ref{fig:co2}(b--c) shows the resulting posterior means.

\begin{table}[ht]
\centering
\resizebox{\textwidth}{!}{%
\begin{tabular}{ccccccccccc}
  Alg & $\sigma$ & $\varepsilon$ & $N$ & $m$ & iters & tot (s) & mem (GB) & $\textrm{EEPM}$ & $\textrm{EEPM}_{\textrm{new}}$ & RMSE \\ 
\hline
EFGP & $0.10$ & $10^{-5}$ & $3\times 10^{6}$ & $94$ & $2853$ & $9$ & $0.1$ & $4.6\times 10^{-3}$ & $4.6\times 10^{-3}$ & $1.0\times 10^{-1}$\\ 
EFGP & $0.10$ & $10^{-7}$ & $3\times 10^{6}$ & $346$ & $9481$ & $517$ & $0.1$ & $2.0\times 10^{-4}$ & $1.9\times 10^{-4}$ & $1.0\times 10^{-1}$\\ 
FLAM & $0.10$ & $10^{-5}$ & $3\times 10^{6}$ & & & $284$ & $7.5$ & $1.9\times 10^{-3}$ & $2.9\times 10^{-2}$ & $1.0\times 10^{-1}$\\ 
FLAM & $0.10$ & $10^{-7}$ & $3\times 10^{6}$ & & & $384$ & $9.1$ & $5.4\times 10^{-5}$ & $3.0\times 10^{-4}$ & $1.0\times 10^{-1}$\\ 
\hline \hline 
EFGP & $0.10$ & $10^{-5}$ & $10^{7}$ & $94$ & $2634$ & $10$ & $0.3$ & $3.9\times 10^{-3}$ & $3.9\times 10^{-3}$ & $1.0\times 10^{-1}$\\ 
EFGP & $0.10$ & $10^{-7}$ & $10^{7}$ & $346$ & $15398$ & $878$ & $0.7$ & $3.4\times 10^{-4}$ & $3.4\times 10^{-4}$ & $1.0\times 10^{-1}$\\ 
FLAM & $0.10$ & $10^{-5}$ & $10^{7}$ & & & $732$ & $23.0$ & $6.2\times 10^{-2}$ & $5.8\times 10^{-1}$ & $5.9\times 10^{-1}$\\ 
FLAM & $0.10$ & $10^{-7}$ & $10^{7}$ & & & $1272$ & $25.0$ & $8.0\times 10^{-5}$ & $4.6\times 10^{-4}$ & $1.0\times 10^{-1}$\\ 
\hline \hline 
EFGP & $0.10$ & $10^{-5}$ & $3\times 10^{7}$ & $94$ & $1915$ & $9$ & $2.6$ & $3.1\times 10^{-3}$ & $3.1\times 10^{-3}$ & $1.0\times 10^{-1}$\\ 
EFGP & $0.10$ & $10^{-7}$ & $3\times 10^{7}$ & $346$ & $23792$ & $1315$ & $2.8$ & $5.4\times 10^{-4}$ & $5.4\times 10^{-4}$ & $1.0\times 10^{-1}$\\ 
FLAM & $0.10$ & $10^{-5}$ & $3\times 10^{7}$ & & & $1777$ & $51.9$ & $1.8\times 10^{-1}$ & $2.6\times 10^{0}$ & $2.6\times 10^{0}$\\ 
FLAM & $0.10$ & $10^{-7}$ & $3\times 10^{7}$ & & & $3328$ & $54.6$ & $1.0\times 10^{-4}$ & $7.7\times 10^{-4}$ & $1.0\times 10^{-1}$\\ 
\hline \hline 
EFGP & $0.10$ & $10^{-5}$ & $10^{8}$ & $94$ & $1393$ & $14$ & $9.3$ & $2.3\times 10^{-3}$ & $2.3\times 10^{-3}$ & $1.0\times 10^{-1}$\\ 
EFGP & $0.10$ & $10^{-7}$ & $10^{8}$ & $346$ & $35905$ & $2055$ & $9.5$ & $7.6\times 10^{-4}$ & $7.6\times 10^{-4}$ & $1.0\times 10^{-1}$\\ 
\hline \hline 
EFGP & $0.10$ & $10^{-5}$ & $10^{9}$ & $94$ & $1027$ & $103$ & $96.7$ & $1.2\times 10^{-3}$ & $1.2\times 10^{-3}$ & $1.0\times 10^{-1}$\\ 
EFGP & $0.10$ & $10^{-7}$ & $10^{9}$ & $346$ & $66199$ & $4048$ & $97.0$ & $7.9\times 10^{-4}$ & $7.9\times 10^{-4}$ & $1.0\times 10^{-1}$\\ 
\hline \hline 
\end{tabular}
}%

\caption{Timings (on the desktop), memory usage, and accuracies, for
  2D massive-scale GP regressions using the Mat\'ern kernel, for $\ell = 0.1$, $\nu=1.5$, and fixed $\sigma$.}
\label{tbig1}
\end{table}

\begin{table}[ht]
\centering
\resizebox{\textwidth}{!}{%
\begin{tabular}{ccccccccccc}
  Alg & $\sigma$ & $\varepsilon$ & $N$ & $m$ & iters & tot (s) & mem (GB) & $\textrm{EEPM}$ & $\textrm{EEPM}_{\textrm{new}}$ & RMSE \\ 
\hline
EFGP & $0.10$ & $10^{-5}$ & $3\times 10^{6}$ & $94$ & $2853$ & $9$ & $0.1$ & $4.6\times 10^{-3}$ & $4.6\times 10^{-3}$ & $1.0\times 10^{-1}$\\ 
EFGP & $0.10$ & $10^{-7}$ & $3\times 10^{6}$ & $346$ & $9481$ & $517$ & $0.1$ & $2.0\times 10^{-4}$ & $1.9\times 10^{-4}$ & $1.0\times 10^{-1}$\\ 
FLAM & $0.10$ & $10^{-5}$ & $3\times 10^{6}$ & & & $284$ & $7.5$ & $1.9\times 10^{-3}$ & $2.9\times 10^{-2}$ & $1.0\times 10^{-1}$\\ 
FLAM & $0.10$ & $10^{-7}$ & $3\times 10^{6}$ & & & $384$ & $9.1$ & $5.4\times 10^{-5}$ & $3.0\times 10^{-4}$ & $1.0\times 10^{-1}$\\ 
\hline \hline 
EFGP & $0.18$ & $10^{-5}$ & $10^{7}$ & $94$ & $2802$ & $10$ & $0.2$ & $4.6\times 10^{-3}$ & $4.6\times 10^{-3}$ & $1.8\times 10^{-1}$\\ 
EFGP & $0.18$ & $10^{-7}$ & $10^{7}$ & $346$ & $9479$ & $531$ & $0.5$ & $1.9\times 10^{-4}$ & $1.9\times 10^{-4}$ & $1.8\times 10^{-1}$\\ 
FLAM & $0.18$ & $10^{-5}$ & $10^{7}$ & & & $722$ & $23.0$ & $8.1\times 10^{-3}$ & $5.3\times 10^{-2}$ & $1.9\times 10^{-1}$\\ 
FLAM & $0.18$ & $10^{-7}$ & $10^{7}$ & & & $1206$ & $25.0$ & $5.7\times 10^{-5}$ & $3.0\times 10^{-4}$ & $1.8\times 10^{-1}$\\ 
\hline \hline 
EFGP & $0.32$ & $10^{-5}$ & $3\times 10^{7}$ & $94$ & $2791$ & $12$ & $2.5$ & $4.7\times 10^{-3}$ & $4.7\times 10^{-3}$ & $3.2\times 10^{-1}$\\ 
EFGP & $0.32$ & $10^{-7}$ & $3\times 10^{7}$ & $346$ & $9597$ & $546$ & $2.8$ & $1.9\times 10^{-4}$ & $1.9\times 10^{-4}$ & $3.2\times 10^{-1}$\\ 
FLAM & $0.32$ & $10^{-5}$ & $3\times 10^{7}$ & & & $1770$ & $51.9$ & $1.8\times 10^{-3}$ & $4.0\times 10^{-2}$ & $3.2\times 10^{-1}$\\ 
FLAM & $0.32$ & $10^{-7}$ & $3\times 10^{7}$ & & & $3298$ & $54.6$ & $5.5\times 10^{-5}$ & $3.4\times 10^{-4}$ & $3.2\times 10^{-1}$\\ 
\hline \hline 
EFGP & $0.58$ & $10^{-5}$ & $10^{8}$ & $94$ & $2603$ & $18$ & $9.2$ & $4.6\times 10^{-3}$ & $4.6\times 10^{-3}$ & $5.8\times 10^{-1}$\\ 
EFGP & $0.58$ & $10^{-7}$ & $10^{8}$ & $346$ & $9559$ & $551$ & $9.5$ & $1.9\times 10^{-4}$ & $1.9\times 10^{-4}$ & $5.8\times 10^{-1}$\\ 
\hline \hline 
EFGP & $1.83$ & $10^{-5}$ & $10^{9}$ & $94$ & $2544$ & $121$ & $96.8$ & $4.6\times 10^{-3}$ & $4.6\times 10^{-3}$ & $1.8\times 10^{0}$\\ 
EFGP & $1.83$ & $10^{-7}$ & $10^{9}$ & $346$ & $9430$ & $719$ & $97.0$ & $1.9\times 10^{-4}$ & $1.9\times 10^{-4}$ & $1.8\times 10^{0}$\\ 
\hline \hline 
\end{tabular}
}%

 \caption{Timings (on the desktop), memory usage, and accuracies, for 2D massive-scale GP regressions using the Mat\'ern kernel, for $\ell = 0.1$, $\nu=1.5$, and fixed $N/\sigma^2$.}
\label{tbig2}
\end{table}

\subsection{EFGP and FLAM performance for massive-scale 2D regression}
\label{s:big}

Based on the results in Section~\ref{s:comp}, we see that EFGP and FLAM
are the two most competitive methods of the four tested for GP regression in 2D.
In this section, we investigate their performance
on very large datasets $N \gg 10^{6}$. 
We used synthetic data as described at the start of this section,
with $\omega=(3,4)$.
We study the Mat\'ern kernel with $\nu = 3/2$ (an intermediate smoothness)
and $\ell = 0.1$.
We compare EFGP and FLAM in two different
regimes: in
Table~\ref{tbig1}, $\sigma$ (the standard deviation for both the additive noise and the regression noise model) is held fixed as $N$ grows,
while in Table~\ref{tbig2},
$N/\sigma^2$ is fixed
(a scaling expected to give constant posterior variance in the limit of
dense data).
The tables explore training data sizes $N$ up to a billion,
and two algorithm tolerances $\eps$.
As before, $m$ and $\niter$ pertain only to
EFGP and denote the number of positive Fourier discretization nodes
in each dimension, and the number of iterations of conjugate gradient.
``tot'' denotes the total run time, including evaluation of the posterior
mean $\hat\mu(x)$ on a target grid of $n_{t} = 1000$ points per dimension.
``mem (GB)'' estimates the peak RAM usage.
The reference posterior means for evaluating the errors EEPM and EEPM$_{\textrm{new}}$ were computed using EFGP with $\varepsilon = 10^{-8}$.

Focusing first on results at tolerance $\eps=10^{-7}$,
for moderate $N=3\times 10^6$, EFGP and FLAM are about the same speed,
but as $N$ grows EFGP becomes about three times faster than FLAM
despite some huge iteration counts (${>}10^4$).
Both achieve similar accuracies (3-4 digits)
and RMSE values matching $\sigma$, indicating successful regression.
However, the RAM footprint of EFGP, asymptotically linear in $N$,
is {\em much smaller} than FLAM, by factors
between 20 and 100.
This means that on the desktop EFGP can reach $N=10^9$ while
FLAM could not be run for $N>10^{8}$ points.

The story for the looser tolerance $\varepsilon = 10^{-5}$ with fixed $\sigma$ 
(Table~\ref{tbig1}) is strikingly different:
with FLAM both error metrics diverge as $N$ grows, indicating
poor regression for $N\ge 10^7$,
while EFGP retains around 3-digit accuracy and an RMSE matching $\sigma$.
Curiously, for EFGP at this looser tolerance,
$\niter$ is up to 60 times smaller than at $\eps=10^{-7}$,
and now {\em decreases} with $N$, against expectations from
condition number growth.
However, this does not seem to hurt the accuracy, with
the net result that running EFGP at this looser tolerance
is up to 100 times faster than either EFGP or FLAM at $\eps=10^{-7}$.
EFGP takes only 2 minutes for 3-digit accuracy at $N=10^9$.

Turning our attention to Table~\ref{tbig2} where $N/\sigma^2$ is held fixed, we observe that FLAM with $\varepsilon=10^{-5}$ tends to perform better than in the case where $\sigma$ itself was fixed.
In fact, now $\niter$ is quite steady for either tolerance,
consistent with an $\bigO(N / \sigma^2)$ condition number.
Thus, since the precomputation is negligible,
the EFGP timing is almost independent of $N$ for $N\le 10^{8}$.
EFGP is six times faster than FLAM for $N = 3 \times 10^{7}$ at $\varepsilon = 10^{-7}$, and $150$ times faster for $\varepsilon=10^{-5}$.

Finally, although Theorem~\ref{t:muerr} gave much more pessimistic bounds
for new targets than for the training points,
we see that for EFGP, in both tables, EEPM$_{\textrm{new}}$ is usually
very similar to EEPM.
For FLAM, in contrast, EEPM$_{\textrm{new}}$ is usually 1 digit worse than
EEPM, indicating less generalizability to new targets.

\section{Conclusions and generalizations} \label{sec:conclusion}

We have described EFGP, a fast iterative algorithm for Gaussian process
regression built upon kernel approximation by equispaced
quadrature in the Fourier domain.
The nonuniform FFT allows essentially a single pass through the data.
Not only is the resulting weight-space
linear system size $M$ independent of the data size
$N$, but its Toeplitz-type
system matrix may then be applied efficiently via a $d$-dimensional FFT.
The algorithms of this paper perform especially strongly with smoother
kernels (rapid decay in Fourier space), small dimension $d$,
and large $N$, so are expected to open up applications in 
time series, spatial, and spatio-temporal statistics.
Numerical experiments on posterior mean prediction
(i.e., kernel ridge regression) with given kernels
show large speed-ups over three state-of-the-art methods,
and handle $N$ up to a billion
even without attempting preconditioning. 
Such problems are 2--3 orders of magnitude bigger than previously reported
with a standard workstation.

We also provide several relevant analytical results,
including a uniform kernel approximation error bound, and
the numerical parameters achieving a user-requested kernel error.
Our main theorem bounds on the error in the posterior mean
relative to exact inference,
at training or at new test points, in terms of kernel approximation error
and the residual due to approximate iterative solution of the linear system.
New numerical evidence suggests that they are far from sharp.
Proving sharper estimates may demand assumptions on the data point distribution,
and is an open area of research.

This work suggests a few future other directions.
A simple one would be the evaluation of the posterior variance:
aside from the fact that a new linear solve is needed for each target, this
would be a routine application of EFGP as presented.
Another is to address the extremely large iteration counts that occur
as $N/\sigma^2$ increases. Even though the FFT renders
each iteration cheap, its cost starts to dominate for the largest problems
considered.
The weight-space linear system \eqref{be}
takes (after diagonal scaling) the form Toeplitz plus diagonal,
and preconditioners are yet to be explored.
Finally, sampling from Gaussian process priors
could be done efficiently in the equispaced Fourier basis, by
evaluating
$f(x) = \beta_1 \phi_1(x) + \dots + \beta_M \phi_M(x)$
with iid random coefficients using the
nonuniform FFT as in step 6 of Algorithm \ref{a1}.

Finally, we note that Fourier grid
methods do not scale well with dimension, because $M=\bigO(m^d)$ total modes are
required. In practice this may limit them to, say,
$d\le 6$ for smooth kernels (such as squared-exponential)
or $d\le 4$ for non-smooth (such as Mat\'ern-$\sfrac{1}{2}$).
However, the other methods compared in this work (FLAM, RLCM, and SKI)
similarly suffer, due to ranks (or number of inducing points) also
growing exponentially in $d$.
This illustrates that getting high accuracy, for high $d$, and large data
size $N$, remains extremely challenging.

\section*{Acknowledgments}
The authors are grateful for helpful discussions with
Leslie Greengard, Jeremy Hoskins, Jie Chen, Ken Ho, 
Victor Minden, Andrew Gordon Wilson, and Amit Singer.

\bibliographystyle{siamplain}
\bibliography{refs}

\end{document}